\documentclass{elsarticle}
\usepackage{natbib}
\usepackage{graphicx}
\usepackage{xcolor}
\usepackage{amsthm}
\usepackage[linesnumbered,ruled]{algorithm2e}
\usepackage[margin=1in]{geometry}
\usepackage{hyperref}

\usepackage{subcaption}
\usepackage{amsmath}
\usepackage{enumitem}
\usepackage{algpseudocode}
\usepackage{rotating}
\usepackage{comment}
\usepackage{multirow} 
\newtheorem{theorem}{Proposition}

\begin{document}
\begin{frontmatter}
\title{An Exact Solution Algorithm for the Bi-Level Optimization Problem of Electric Vehicles Charging Station Placement}
\author{
  Mobina Nankali\textsuperscript{a}, 
  Michael W. Levin\textsuperscript{a,}\footnote{Corresponding author: mlevin@umn.edu}\\
  \textsuperscript{a}Department of Civil, Environmental, and Geo-Engineering,\\University of Minnesota,\\
  500 Pillsbury Drive SE, Minneapolis, MN 55455, USA.\\
  Emails: \texttt{nanka005@umn.edu}, \texttt{mlevin@umn.edu}
}
\begin{abstract}
This work addresses electric vehicle (EV) charging station placement through a bi-level optimization model, where the upper-level planner maximizes net revenue by selecting station locations under budget constraints, while EV users at the lower level choose routes and charging stations to minimize travel and charging costs. To account for range anxiety, we construct a battery-expanded network and apply a shortest path algorithm with Frank-Wolfe traffic assignment.
Our primary contribution is developing the first exact solution algorithm for large scale EV charging station placement problems. We propose a Branch-and-Price-and-Cut algorithm enhanced with value function cuts and column generation. Our exact algorithm delivers globally optimal solutions with mathematical certainty.
Computational experiments on the Eastern Massachusetts network (74 nodes, 248 links), the Anaheim network (416 nodes, 914 links), and the Barcelona network (110 zones, 1,020 nodes, and 2,512 links) demonstrate exceptional performance. Our algorithm terminates within minutes, while achieving optimality gaps below 1\% across all instances. Controlled benchmarks against two genetic algorithms on identical instances confirm that the proposed algorithm finds equal or better solutions in 3–50$\times$ less computation time across all tested networks. The algorithm successfully handles problems with over 300,000 feasible combinations, transforming EV charging infrastructure planning into a tractable optimization suitable for practical decision making on real-world networks with optimality guaranteed.
\end{abstract}

\begin{keyword}
Bi-level optimization exact algorithm, Electric vehicles, Charging station infrastructure
\end{keyword}
\end{frontmatter}
\newpage
\section{Introduction}
The shifting trend to EVs has created a need for efficient, accessible, and sustainable charging infrastructure. Strategic placement of EV charging stations is critical to accommodate growing demand and to shape user behavior and ensure equitable distribution of services across a network \cite{ESMAILI2024105409}. However, the planning problem is inherently complex due to the hierarchical interaction between infrastructure providers and EV users. 

Determining optimal locations for EV charging stations is a critical infrastructure planning challenge that depends on multiple interdependent factors, including user behavior, network flow patterns, demand distribution, and the broader energy ecosystem. Beyond serving transportation needs, strategically located stations can also reduce the economic and safety concerns associated with fuel transportation and being resilient \cite{GHORBANALIZADEGAN2026104627, movahedi}. Since building charging stations at every site with potential demand is infeasible due to budget constraints, planners must strategically identify locations that maximize accessibility and efficiency of usage. However, the optimal placement of charging stations is not simply a matter of available budget. It is also closely tied to user responses and the resulting traffic flow within the transportation network. Naturally, users try to minimize their travel and charging costs, meanwhile their collective choices influence network utilization and congestion. This paper addresses the charging station deployment problem using a bi-level optimization framework. At the upper level, a network planner seeks to design a profitable charging network by selecting station locations within a fixed budget and estimating revenue from user charging behavior. Moreover, EV users act as rational agents that minimize travel and charging costs, which depend on the network configuration. The interplay between these two decision making layers requires a two-level model that captures both planner objectives and user equilibrium (UE) responses.

Many papers have explored the charging station placement for EVs using a bi-level framework \cite{doi:10.1287/trsc.2021.0494, 10122631, 10034431, 9690622, 9944904}. \citet{9944904} plan urban roads, EV charging stations, and the power grid by a bi-level model formulation where the upper level (UL) is investment decisions and the lower level (LL) is EV UE routing. They convert the entire model to a single mixed integer linear program (MILP) via KKT conditions and an improved Big-$M$ linearization, then solve it exactly with a solver. Their formulation relies on very large $M$ constants. Large $M$ MILPs can still suffer weak relaxations and long solve times on bigger networks. They have run their method on a 41 nodes/56-bus case study; the authors note that adding expressways or finer zoning could induce larger computational burdens, and they suggest decomposition (e.g., Benders) or scenario reduction would be needed for bigger instances, but these techniques are not yet implemented. 

In another study, \citet{https://doi.org/10.1155/2017/4252946} formulates a bi-level problem where the UL planner selects some roadway link candidates to host chargers to maximize battery electric vehicle (BEV) flow, while a LL computes a stochastic UE (SUE) restricted by path distance in which BEVs are limited by driving range. Based on the paper, the resulting mixed integer nonlinear problem (MINP) is intractable for exact solvers, and the authors propose a heuristic loop. They demonstrate the method on two benchmark road networks: Nguyen-Dupuis (13 nodes, 19 links, 4 O-D pairs) and Sioux Falls (24 nodes, 76 links, 576 O-D pairs). While their algorithm is fast on small graphs, heuristic algorithms do not offer an optimality guarantee. It can stall at a local maximum of covered BEV flow and provides no bound on how close that solution is to the true optimum. Similarly, \citep{Tran10122021} used a purely heuristic called Cross-Entropy Method (CEM). UL chooses nodes to minimize the sum of the stations' installation costs and the value of time spent traveling. At the same time, LL assigns mixed EV traffic via a multi class UE that respects EV range. It is shown that the algorithm converges in their test networks, but it is not guaranteed to converge to the global optimum. The algorithm provides no optimality gap or confidence interval.

In contrast to these earlier studies, our work provides the first exact solution methodology that scales to metropolitan scale road networks. We formulate the charging station placement problem as a MINP identical in fidelity to the network models above, but we solve it directly without surrogate objectives, path enumeration, or heuristic sampling. The resulting algorithm guarantees global optimality on networks with over 1,000 nodes within hours. A controlled comparison against genetic algorithm benchmarks on identical instances confirms the runtime and solution quality advantages. Hence, our contribution is twofold: we close the optimality gap left by heuristic approaches and we demonstrate tractability on very large, real world networks. Building on insights from recent literature, the UL decision involves binary location variables and station cost constraints, while the LL subproblem models traffic assignment and charging flows using a convex UE formulation. We propose a Branch-and-Price-and-Cut (BPC) algorithm enhanced with value function cuts and column generation, built upon the high point relaxation (HPR), to solve the bi-level problem to global optimality. This formulation enables network planners to anticipate user behavior and optimize system design in a way that balances profitability and user accessibility.

\section{Literature Review}
Network design problems are a foundational topic in transportation research, which involves decisions about how to modify a transportation network to improve overall system performance. These problems are often modeled as bi-level programs, where a system planner (the leader) makes infrastructure decisions (e.g., adding new links), and travelers (the followers) respond by selecting routes under UE. Depending on the nature of the planner's decisions, network design problems are classified as either continuous (e.g., adjusting capacities or tolls) or discrete (e.g., selecting which links to build or upgrade). In this literature review, we focus on the discrete network design problem (DNDP), where the planner chooses from a set of binary investment options, such as whether to install a facility or not, since our problem is discrete.
\subsection{Discrete Traffic Network Design Problem}
One of the first papers on DNDP problems was \citet{leblanc1975algorithm}, who proposed a model with binary decision variables for link construction under budget constraints. The problem minimize network congestion through system optimal (SO) flows. They obtained lower bounds using HPR, and replaced UE with system optimum traffic assignment while optimistically assuming all unfixed candidate links are built and ignoring budget constraints.
\citet{Farvaresh2013} argued that fixing unfixed variables at 1 with system optimal flows creates inefficient lower bounds with high gaps. They implemented a B\&B with fathoming processes and formulated a mixed integer convex program solved exactly with outer approximation (OA) loops, which yields tighter lower bounds by keeping both budget and binary status of undecided links. They also revised the path-based formulation into a link-node multi-commodity network. So they eliminated explicit path enumeration and enabled modern assignment routines.

\citet{WANG201342} solved the harder multi capacity DNDP, where each candidate link may receive 0, 1, or more additional lanes. They keep \citet{leblanc1975algorithm}’s insight on SO routing gives an admissible lower bound, but embed it in two global optimization schemes that avoid the looseness \citet{Farvaresh2013} still inherits from switching to SO only after branching. Their first scheme, SO relaxation, sorts all designs by increasing SO travel time and successively evaluates them under UE. 
The second, tighter UE reduction, adds the Beckmann–McGuire–Winsten objective of the UE problem as a convex inequality to the SO model, thereby shrinking the relaxation without losing validity. \citet{https://doi.org/10.1111/mice.12224} replaced \citet{Farvaresh2013}’s OA lower bound with an embedded Benders decomposition: at every B\&B node, the SO DNDP is written as a MINP and decomposed into a convex SO traffic assignment subproblem and an MILP master. The dual produced by the subproblem yields a markedly tighter bound than the previous algorithms. 
These refinements allow the first exact treatment of a real scale, multimodal, multiclass network (Winnipeg: 943 nodes, 3,075 links, 20 projects). 

\citet{ReyLevin2024} reformulate the SO relaxation of the DNDP in a path space and exploit two structural facts that earlier exact methods left untapped: (i) the total system travel time objective is link-separable and therefore amenable to per link OA; and (ii) the exponential path set can be navigated on demand via column generation (CG), so no path enumeration is needed.
These relaxations feed a single tree BPC algorithm, where lower bounds come from the evolving LP (restricted master), and upper bounds are obtained by solving a UE whenever a node becomes feasible regarding the budget.  
Compared with \cite{Farvaresh2013}'s link-based OA B\&B, the path-based master removes multi commodity link constraints and scales the OA cut only where needed. Unlike \citet{https://doi.org/10.1111/mice.12224}’s B\&B–Benders, which still solves a nonlinear SO subproblem at every node, the BPC bound is a linear LP and thus faster per evaluation.  

\subsection{EV charging station problem}
Many papers have considered the problem of choosing locations for the charging station for EVs without considering the response of the EV users \cite{doi:10.3141/2252-12, 10406825, 8388731, 10703145, 9713435, 202601.1754, ghamami2025evcharging}. For example, \citet{doi:10.3141/2252-12} estimated the day and night charge demand of EVs through census and employment regressions and solved a mixed integer problem (MIP) capable of delivering maximum coverage to site chargers for maximum coverage within a certain radius. However, by treating drivers as fixed demand points and omitting any UE feedback, the model cannot capture route choice and EV user response to station placement. \citet{10406825} solved EV charger siting and sizing under uncertain range and demand as a two stage stochastic MILP on 1,079 Pennsylvania demand nodes. The model treats vehicles as fixed demands with fictitious travel penalties, so it neglects route choice, congestion and other UE responses that a bi-level formulation would capture.

Some papers have considered a bi-level formulation and capturing the user's response \cite{doi:10.1287/trsc.2021.0494, 10122631, 10034431, 9690622, 9387405, 10196503, 8064175, https://doi.org/10.1155/2017/4252946, HE2015227, 10422114, 8064175}. They can be divided into two categories, on whether they propose exact or heuristic solution algorithms. \citet{doi:10.1287/trsc.2021.0494} addressed strategic EV charging station location and sizing under stochastic flows and congestion using bi-level optimization. The UL minimized infrastructure costs while ensuring probabilistic service level constraints, while the LL represented EV users minimizing route lengths. They simplified the bi-level model into a single-level MILP with M/M/c queuing and developed an exact Benders' decomposition solution. However, the algorithm only guarantees global optimality on small networks. For larger cases, the MILP becomes intractable and results may be far from optimal with unknown gaps. \citet{10122631} proposed a bi-level model for EV charging station placement and direct current fast charger in highway networks, with the UL minimizing total annualized costs and the LL addressing UE traffic assignment and power distribution network operations. They converted this to a single-level MILP using KKT conditions, ensuring global optimality. Testing was limited to a modest 3-city network with 6 OD pairs and 14 pregenerated paths, leaving practical tractability on larger instances with thousands of potential paths unclear.

\citet{10034431} developed a hierarchical EV charging network design model with dynamic pricing. Their problem minimize facility and operating costs while maximizing operator revenue balanced against user travel and charging expenses. They reformulated the bi-level problem into a single-level model using KKT conditions and solved it with an iterative active set heuristic alternating location fixes with Macroscopic-Fundamental-Diagram (MFD) updates for travel times. The results on a network with 12 nodes showed a gap within 4.5\% of the exact solution, while on a 42 nodes campus network they achieved a 0.43\% gap. Without global optimality guarantees, this solution is a near-optimal heuristic rather than an exact solver. In another study, \citet{9690622} developed a hierarchical model for EV charging infrastructure design and operational management. The UL minimize deployment costs and maximize revenue through dynamic pricing, and the LL minimize user travel and charging expenses UE conditions. They implemented Mitsos's global optimization bounding scheme, which theoretically reaches global optimum but practically stopped at 4.58\% and 2.89\% gaps on test cases. This yields a solution that is near optimal. The algorithm required tens of hours for networks under 350 links, which shows intractability for larger networks without additional strategies.

\citet{HE2015227} framed public charger siting as a bi-level problem with budget constraint. The UL choose station locations and the LL use tour-based equilibrium for EV drivers which select routes and recharge stops, based on battery range and risk preferences. They reformulated this as a path-based program solved with a genetic algorithm (GA) coupled to CG for the equilibrium subproblem. Since the GA is heuristic without optimality guarantees and testing was limited to the modest Sioux Falls network (24 nodes, 76 links), scalability and solution quality on larger networks remain uncertain.
\citet{10422114} proposed a bi-level model for electric autonomous vehicles with an UL modified p-median locating fixed fast charging stations to minimize flow weighted access distance, and a LL SO traffic assignment with stochastic charging demand minimizing total travel, charging, and queuing costs. They solved this with an iterative heuristic alternating traffic re-assignment and station re-siting, which relax the nonlinear LL via generalized Lagrangian and Frank Wolfe approximation. Testing only on the Nguyen-Dupuis case (13 nodes, 19 links) leaves scalability and global optimality unproven.

Our contribution addresses critical limitations identified in prior studies by explicitly modeling EV users' responses via a bi-level framework, which uses UE as the LL problem. We develop an exact solution algorithm that achieves proven global optimality within practical runtimes on networks with up to 1,020 nodes and 2,512 links. We demonstrate the algorithm's efficiency through controlled benchmarks against two genetic algorithms on identical problem instances, where BPC consistently finds equal or better solutions in substantially less computation time. Since prior exact algorithms in the literature differ in formulation, objective structure, and network representation, direct head-to-head comparison based on reported network sizes or optimality gaps is not methodologically valid; our internal benchmarks therefore provide the primary basis for evaluating computational performance.

\section{Formulation}
We consider a bi-level formulation. The UL decision maker (network planner) designs an EV charging network and tries to maximize their revenue. At the same time, the LL represents the users, who choose charging stations to minimize their travel and charging costs, given the network design set by the planner. Because the planner’s revenue depends on the flow each station ultimately attracts, the UL objective is evaluated through the LL equilibrium. The planner's revenue depends on the users' choices, which will create a hierarchical optimization structure where the UL must solve the LL problem to evaluate any potential network design. For the route choice behavior of the LL problem, we assume that users follow UE principles, which means users seek the path with the minimum cost. Most papers whose model is traffic assignment make the same assumptions, too. The mathematical formulation of UE can be solved as a convex program because our model satisfies the necessary conditions: 1— We have assumed that users have perfect knowledge of the path cost; 2— The travel cost on a given link depends only on the flow on that link, and the travel cost functions are positive and monotonically increasing; 3— Each user selects the path with the least travel cost between their origin and destination.
Under these assumptions, the UE condition is characterized by the principle
that all used routes connecting the same origin-destination (OD) pair have equal and minimal travel cost. Unused routes may have higher travel
costs, and routes connecting different OD pairs may have different travel costs. A stable equilibrium is achieved when no user can reduce travel time by unilaterally changing routes. This formulation of the UE condition, first proposed by \citet{doi:10.1680/ipeds.1952.11259}, has been widely applied in transportation research. By solving the UE traffic assignment problem (TAP), we can output the route flows and each charging station flow that minimize travel cost in equilibrium across the network.

\subsection{LL Problem (User Charging Decision)}
In the LL problem, EV users choose their routes and charging stations in response to the infrastructure decisions made by the UL planner. Specifically, given the binary decision vector $\mathbf{y}$, which indicates whether a charging station is built at each candidate location. Users select feasible paths that minimize their combined travel and charging costs. The resulting traffic pattern must satisfy the UE condition: no user can improve their total cost by unilaterally switching to another available path, given the current congestion and charging infrastructure. This equilibrium reflects rational traveler behavior under perfect information and cost sensitivity. We assume travel time on each link is a function of flow, and is continuous and monotonically non-decreasing in flow, which ensures that the Beckmann objective is convex and the UE problem remains tractable. This assumption is standard in transportation modeling and is satisfied by commonly used functions such as the Bureau of Public Roads (BPR) function.

In real networks, not all EV travel demands between OD pairs can be satisfied due to limited battery range and sparse charging infrastructure. This is an issue for travel over long distances, since usually the trips within a city can be completed without needing to charge along the route. As a result, for trips with long distances, certain trips become infeasible if no reachable path exists that either stays within the battery limit or includes a charging opportunity. This constraint motivates defining the set of feasible paths based on battery limitations, which acknowledges that some OD pairs may have zero flow in the UE solution. The feasible paths available to each user depend on the installed charging infrastructure $\mathbf{y}$.

The LL problem thus jointly determines equilibrium path flows, link flows, and charging station demand, conditional on the UL decision $\mathbf{y}$. We consider a network $\mathcal{G} =(\mathcal{N}, \mathcal{A})$: graph $\mathcal{G}$ is the tuple $(\mathcal{N}, \mathcal{A})$ comprised of the set of nodes $\mathcal{N}$ and the set of links $\mathcal{A}$. Each link can be denoted by a pair of nodes $(i, j)$. Let  $\mathcal{Z} \subseteq \mathcal{N}$ be the set of zones, which represent nodes that are the origins or destinations of the travelers. $\mathcal{C} \subseteq \mathcal{N}$ is the set of candidate locations for building charging infrastructure. A path is feasible for an EV if either the usage of the battery is less than the battery limit, assuming that users will start their trip with a full battery, or there is a charging station on the route where the user recharges the battery to reach the destination before the battery is depleted. Therefore, not all demand from an origin to a destination can be met, and it depends on the $\mathbf{y}$, which specifies where charging stations are located and consequently which paths are feasible. Let $\Pi_{rs} (\mathbf{y})$ be the set of feasible paths based on the charging stations location $\mathbf{y}$ between the origin $r \in \mathcal{Z}$ and destination $s \in \mathcal{Z}$. Let $\Pi (\mathbf{y)} = \cup_{(r,s) \in \mathcal{Z}^2} \Pi_{rs} (\mathbf{y})$ be the set of all feasible paths. Travel time is considered the travel cost. The travel time for the link $(i, j)$ is based on a convex function of traffic flow on this link, which is $t_{ij} (x_{ij})$, where $x_{ij}$ is the link $(i, j)$ flow. Similarly, the waiting time due to the congestion and delay at station $l$ is modeled as a convex function $t_l(v_l)$ of the charging flow $v_l$. Before presenting the formulation, we summarize the key notation used throughout the model in Table \ref{tab:notation}.
\begin{table}[htbp]
\centering
\caption{Summary of Notation}
\label{tab:notation}
\begin{tabular}{cl}
\hline
\textbf{Notation} & \textbf{Description} \\
\hline
\multicolumn{2}{l}{\textit{Sets}} \\
$\mathcal{G} = (\mathcal{N}, \mathcal{A})$ & Transportation network with nodes $\mathcal{N}$ and links $\mathcal{A}$ \\
$\mathcal{Z} \subseteq \mathcal{N}$ & Set of zones (origins and destinations) \\
$\mathcal{C} \subseteq \mathcal{N}$ & Set of candidate locations for charging stations \\
$\Pi_{rs}(\mathbf{y})$ & Set of feasible paths from origin $r$ to destination $s$ under design $\mathbf{y}$ \\
$\Pi(\mathbf{y})$ & Set of all feasible paths under design $\mathbf{y}$ \\
\hline
\multicolumn{2}{l}{\textit{Parameters}} \\
$D_{rs}$ & Total potential demand between origin $r$ and destination $s$ \\
$d_{rs}(\mathbf{y})$ & Effective demand between $r$ and $s$ (depends on path feasibility) \\
$\delta_{ij}^{\pi}$ & Binary indicator: 1 if path $\pi$ uses link $(i,j)$, 0 otherwise \\
$\phi_{l}^{\pi}$ & Binary indicator: 1 if path $\pi$ includes charging at station $l$, 0 otherwise \\
$\gamma_{l}^{\pi}$ & Charging time at station $l$ for a vehicle on path $\pi$ \\
$c$ & Charging price per unit of time \\
$\tau$ & Value of time (converts charging duration to monetary cost) \\
$c^{\pi}$ & Total charging cost for path $\pi$: $c^{\pi} = \sum_{l \in \mathcal{C}} (c + \tau) \cdot \gamma_l^{\pi}$ \\
$t_{ij}(x_{ij})$ & Travel time function on link $(i,j)$ as a function of flow \\
$t_l(v_l)$ & Waiting time function at station $l$ as a function of charging flow \\
$t_{ij}^0$ & Free-flow travel time on link $(i,j)$ \\
$t_l^0$ & Base charging time at station $l$ (free-flow service time) \\
$C_{ij}$ & Capacity of link $(i,j)$ \\
$K_l$ & Service capacity of station $l$ \\
$C_l$ & Installation cost at candidate location $l$ \\
$B$ & Total available budget for charging station installation \\
$p$ & Charging revenue per unit of time \\
$w$ & Weight for unmet demand penalty in objective function \\
$M, M_l$ & Big-M constants \\
$\alpha, \beta$ & Shape parameters for BPR function \\
$\kappa$ & Capacity-cost proportionality constant: $K_l = \kappa \cdot C_l$ \\
$B^{\max}$ & Full battery capacity \\
$b_l^{\pi}$ & Arrival battery level at station $l$ for a vehicle on path $\pi$ \\
$\rho$ & Charging rate (time required to charge one unit of battery) \\
\hline
\multicolumn{2}{l}{\textit{Decision Variables}} \\
$y_l$ & Binary variable: 1 if charging station is built at location $l$, 0 otherwise \\
$h_{\pi}$ & Flow on path $\pi$ \\
$x_{ij}$ & Flow on link $(i,j)$ \\
$v_l$ & Total charging flow at station $l$ \\
\hline
\end{tabular}
\end{table}

The full formulation of this convex UE problem is given below.
\begin{subequations} \label{eq:ev_flow_optimization}
\begin{alignat}{2}
    \min_{\mathbf{x}, \mathbf{h}, \mathbf{v}} \quad 
    & L(\mathbf{y, x (h), v (h)}) = \sum_{(i, j) \in \mathcal{A}} \int_{0}^{x_{ij}} t_{ij}(x)dx 
    + \sum_{l \in \mathcal{C}} \int_{0}^{v_l} t_l(v) dv 
    + \sum_{\pi \in \Pi(\mathbf{y})} c^{\pi} h_{\pi}
    && \label{eq:ev_objective} \\
    \text{s.t.} \quad 
    & v_l = 0 \quad \text{if} \quad y_l = 0, 
    && \quad \forall l \in \mathcal{C} 
    \label{eq:ev_activation_limit} \\
    & x_{ij} = \sum_{\pi \in \Pi (\mathbf{y})} \delta_{ij}^{\pi} h_{\pi}, 
    && \quad \forall (i, j) \in \mathcal{A} 
    \label{eq:ev_link_flow_def} \\
    &  d_{rs}(\mathbf{y}) = \sum_{\pi \in \Pi_{rs} (\mathbf{y})} h_{\pi}, 
    && \quad \forall (r,s) \in \mathcal{Z}^2 
    \label{eq:ev_demand_satisfaction} \\
    & v_l = \sum_{\pi \in \Pi (\mathbf{y})} \phi_{l}^{\pi} h_{\pi}, 
    && \quad \forall l \in \mathcal{C} 
    \label{eq:ev_node_flow_def} \\
    & v_l \geq 0, 
    && \quad \forall l \in \mathcal{C} 
    \label{eq:ev_node_flow_nonneg} \\
    & h_{\pi} \geq 0.
    && \quad \forall \pi \in \Pi (\mathbf{y}) 
    \label{eq:ev_path_flow_nonneg}
\end{alignat}
\end{subequations}
The LL problem \eqref{eq:ev_flow_optimization} constitutes a modified traffic assignment problem (MTAP) that accounts for the unique cost structure faced by EV users. In contrast to classical TAP formulations, which only consider travel time as the cost, our model includes travel time, station congestion, and path-dependent charging costs, which indicates EV users must also select charging stations along their routes. This extension is reflected in the objective function, which includes three terms: link travel time, station congestion cost, and charging cost based on each path's energy requirements. Furthermore, two additional constraints distinguish this model from standard TAP: (i) a station activation constraint \eqref{eq:ev_activation_limit} ensures no users can charge at the location $l$ unless a charging facility is built there. In this constraint, $y_l$ comes from UL and is fixed here, and (ii) a flow conservation constraint that determines the amount of charging demand at each station based on users' path choices. These modifications are grounded in the assumption that EV travelers behave rationally under perfect information and choose paths that minimize their total travel and charging cost. 

We will name this problem MTAP ($\mathbf{y}$), and let $L(\mathbf{y},\mathbf{x}(\mathbf{h}), \mathbf{v}(\mathbf{h}))$ be the objective function value. 
To achieve the travel time between $r$ and $s$, we need to sum up all the links' travel time between $r$ and $s$, which are in the user's path. The link flows, and the path flows, denoted as $h_{\pi}$, are closely related. Let $\delta^{\pi}_{ij}$ denote whether link $(i, j)$ is used by path $\pi$, so $\delta^{\pi}_{ij}= 0$ if path $\pi$ does not use link $(i, j)$, and $\delta^{\pi}_{ij}= 1$ if it does. Based on constraint \eqref{eq:ev_link_flow_def}, link flows are the sum of all the path flows that use that link.

Constraint \eqref{eq:ev_demand_satisfaction} implies that each trip from origin $r$ to destination $s$ must take one of the available paths that exist in the network and is feasible considering the battery (i.e., the EV can traverse all links or recharge as needed). Therefore, the effective demand $d_{rs}(\mathbf{y})$ between origin $r$ and destination $s$ is defined as:

\begin{equation}
d_{rs}(\mathbf{y}) = \begin{cases} 
D_{rs} & \text{if at least one battery-feasible path exists from } r \text{ to } s \\
0 & \text{otherwise}
\end{cases}
\end{equation}
where $D_{rs}$ represents the total potential demand between origin $r$ and destination $s$. We assume that users only contribute to demand if there is at least one feasible path between their origin and destination. Consequently, we are making demand conditional on network feasibility rather than simply routing existing demand as in classical TAP formulations.

Let $v_l$ be the total flow of vehicles charging at station $l$ per unit time. We can only have a positive value for $v_l$ if the decision is to build a charging station there. Constraint \eqref{eq:ev_activation_limit} shows this linking relation between $v_l$ and $y_l$. So if $y_l$ is 0, then $v_l$ must be 0 too. In constraint \eqref{eq:ev_node_flow_def}, $\phi^{\pi}_{l} \in \{0,1\}$ indicates whether a vehicle traveling along path $\pi$ charges at station $l$, and $h_\pi$ is the flow on that path. Therefore, $v_l$ is equal to the total flow of vehicles that stop to charge at station $l$. To capture congestion effects at charging stations, we model the waiting and service time as a flow-dependent function $t_l(v_l)$, where higher demand leads to increased delays. This reflects the realistic scenario where stations with limited capacity experience congestion as more users arrive.

The third term in the objective function captures the path-dependent charging cost. Let $\gamma^{\pi}_{l}$ denote the amount of time (in hours) that a vehicle traveling along path $\pi$ must spend charging at station $l$, which depends on the battery level upon arrival. The total charging cost for path $\pi$ is defined as $c^{\pi} = \sum_{l \in \mathcal{C}} (c + \tau) \cdot \gamma_l^{\pi}$, where $c$ is the charging price per unit of time and $\tau$ is value of time which converts waiting time to recharge to equivalent cost. This formulation ensures that vehicles arriving with different battery states pay different amounts based on how much they actually need to charge.

The constraints \eqref{eq:ev_path_flow_nonneg} and \eqref{eq:ev_node_flow_nonneg} require the non-negativity for the flow of the charging station and the flow of the path.
The first term in the objective function of problem \eqref{eq:ev_flow_optimization} corresponds to the classical Beckmann transformation, which is a potential function whose minimizer satisfies the UE conditions with respect to travel time. The second term incorporates the station congestion cost through the integral $\int_0^{v_l} t_l(v) dv$, where $t_l(v)$ is a convex, monotonically increasing function representing the marginal waiting time at station $l$ under flow $v$. This formulation ensures that as station utilization increases, users experience longer waiting times, which is reflected in their path choices at equilibrium. The third term $\sum_{\pi} c^{\pi} h_{\pi}$ is linear in path flows and captures the actual energy cost paid by users and the time value they spend charging, which varies by path based on battery consumption before reaching the charging station.

\subsection{Bi-level Problem}
In every network, we have realistic limitations on feasible EV travel paths. In practice, not all EV travel demand can be satisfied due to inherent range restrictions and limited availability of charging infrastructure. Therefore, some OD pairs are infeasible when EVs lack sufficient battery charge or access to charging facilities along feasible routes. Consequently, these infeasible OD pairs naturally lead to a portion of unmet travel demand. The decision of whether to install a charging station at a specific location directly influences this amount of unmet demand. Locations without charging infrastructure can become bottlenecks. They will reduce accessibility and increase the unmet demand. Let $\Pi_{rs}$ be the set of all paths between the origin $r \in \mathcal{Z}$ and destination $s \in \mathcal{Z}$. Let $\Pi = \cup_{(r,s)} \Pi_{rs}$ be the set of all paths. To address this, we considered unmet demand into the UL optimization objective by adding a penalty term that represents the difference between the total demand $D_{rs}$ which also includes the infeasible demand and the actual served demand $ \sum_{\pi \in \Pi_{rs} (\mathbf{y})} h_{\pi}$. Specifically, the additional term $w \cdot \sum_{(r,s)} \left(D_{rs} - \sum_{\pi \in \Pi_{rs} (\mathbf{y})} h_{\pi}\right)$ incentivizes optimal charging station placement that minimizes unmet EV demand across the network, and $w$ is a weight to balance the cost of unmet demand with the value of charging. So the overall bi-level optimization model can now be written as:
\begin{subequations} \label{eq:station_placement_bi-level}
\begin{alignat}{2}
    \min_{\mathbf{y, x, v}} \quad 
    & -\sum_{l\in\mathcal{C}}p \cdot v_l + w \cdot \sum_{(r,s)} \left(D_{rs} - \sum_{\pi \in \Pi_{rs}} h_{\pi}\right)
    && \label{eq:station_obj_bi-level} \\[6pt]
    \text{s.t.} \quad 
    & \sum_{l \in \mathcal{C}} C_l \cdot y_l \leq B,
    && \label{eq:station_budget_bi-level} \\[6pt]
    & y_l \in \{0,1\},
    && \quad \forall l \in \mathcal{C}
    \label{eq:station_binary_bi-level} \\[6pt]
    & \mathbf{v, h} \in \text{MTAP} (\mathbf{y})
\end{alignat}
\end{subequations}
Let $y_l\in\{0,1\}$ be a binary variable equal to 1 if a charging facility is installed at location $l$, and 0 otherwise. Let $c_l$ be the installation cost at location $l$. Equation \eqref{eq:station_budget_bi-level} indicates that the total cost of building charging stations should be less than or equal to the available budget $B$. $p$ is the constant charging price per unit of time. The objective function \eqref{eq:station_obj_bi-level} indicates that the network designer would like to maximize the charging revenue and minimize the unmet demand.

It is important to note that the revenue term and the unmet demand penalty term capture fundamentally different aspects of the planning problem and do not represent double counting. The revenue term $\sum_{l\in\mathcal{C}}p \cdot v_l$ represents the actual charging revenue collected from EV users who successfully complete their trips and charge at the installed stations. In contrast, the unmet demand penalty captures the social cost of travelers who cannot complete their trips at all due to insufficient charging infrastructure. These two terms are mutually exclusive for any given demand: users either contribute to revenue (if they can travel and a feasible path exists) or to the unmet demand penalty (if no feasible path exists), but never both.

Furthermore, the specific objective function presented here represents one particular planning scenario. Our algorithmic framework is general and can accommodate alternative upper-level objectives depending on stakeholder interests. For instance, a private operator focused solely on profit could set $w = 0$, while a public agency prioritizing accessibility could increase $w$ or adopt entirely different objective functions such as minimizing total system travel time or maximizing network coverage.

\section{High Point Relaxation and Value Functions}
In this section, we develop relaxations of the bi-level EV charging station placement problem. We propose the HPR, which provides a valid global lower bound by replacing the LL UE objective with only its flow conservation and feasibility constraints. The resulting single level mixed integer formulation serves as a useful surrogate for bounding in the solution process. We then further build on this by introducing the value function cut to effectively cut the feasible solution values that are not optimal for LL.
\subsection{High Point Relaxation (Global Lower Bound)}
We will derive the HPR problem by considering only the follower's constraints without taking into account its objective function and relaxing the binary variable $y$ to be continious.
\begin{subequations} \label{eq:station_placement_bi-level2}
\begin{alignat}{2}
    \min_{\mathbf{y, x, v}} \quad 
    & -\sum_{l\in\mathcal{C}}p \cdot v_l + w \cdot \sum_{(r,s)} \left(D_{rs} - \sum_{\pi \in \Pi_{rs} (\mathbf{y})} h_{\pi}\right)
    && \label{eq:station_upper_objective} \\[6pt]
    \text{s.t.} \quad 
    & \sum_{l \in \mathcal{C}} C_l \cdot y_l \leq B,
    && \label{eq:station_budget_constraint} \\[6pt]
    & 0 \leq y_l\leq 1, 
    && \quad \forall l \in \mathcal{C}
    \label{eq:station_binary_constraint} \\[6pt]
    & v_l \leq M \cdot y_l, 
    && \quad \forall l \in \mathcal{C}
    \label{eq:bigM_constraint} \\[6pt]
    & x_{ij} = \sum_{\pi \in \Pi (\mathbf{y})} \delta_{ij}^{\pi} h_{\pi}, 
    && \quad \forall (i, j) \in \mathcal{A}
    \label{eq:link_flow_constraint} \\[6pt]
    & \sum_{\pi \in \Pi_{rs} (\mathbf{y})} h_{\pi} = d_{rs}(\mathbf{y}), 
    && \quad \forall (r,s) \in \mathcal{Z}^2
    \label{eq:demand_satisfaction_constraint} \\[6pt]
    & v_l = \sum_{\pi \in \Pi (\mathbf{y})} \phi_{l}^{\pi} h_{\pi}, 
    && \quad \forall l \in \mathcal{C}
    \label{eq:charging_flow_constraint} \\[6pt]
    & v_l \geq 0, 
    && \quad \forall l \in \mathcal{C}
    \label{eq:charging_nonnegative} \\[6pt]
    & h_{\pi} \geq 0, 
    && \quad \forall \pi \in \Pi
    \label{eq:path_flow_nonnegative}
\end{alignat}
\end{subequations}
\begin{theorem}\label{prop:HP_lower_bound}
Let $Z^{\star}$ be the optimal objective value of the bi-level EV–station-placement
problem \eqref{eq:station_placement_bi-level} and let
$Z^{\mathrm{HP}}$ be the optimal objective value of its
HPR \eqref{eq:station_placement_bi-level2}.
Then
\[
  Z^{\mathrm{HP}}\;\le\;Z^{\star}.
\]
\end{theorem}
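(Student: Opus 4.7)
The plan is to establish the inequality by the standard relaxation argument: every feasible solution of the bi-level problem \eqref{eq:station_placement_bi-level} gives rise to a feasible solution of the HPR \eqref{eq:station_placement_bi-level2} with the \emph{same} objective value, so minimizing the identical objective over the (weakly) larger HPR feasible region can only produce a smaller value. Since both problems share the decision variables $(\mathbf{y},\mathbf{x},\mathbf{v},\mathbf{h})$, and the HPR objective \eqref{eq:station_upper_objective} is, up to the cosmetic distinction between $\Pi_{rs}$ and $\Pi_{rs}(\mathbf{y})$ (irrelevant because $h_\pi = 0$ for paths not in $\Pi_{rs}(\mathbf{y})$), the UL objective \eqref{eq:station_obj_bi-level}, the entire argument reduces to a containment of feasible regions.

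First I would fix an optimal bi-level tuple $(\mathbf{y}^{\star},\mathbf{x}^{\star},\mathbf{v}^{\star},\mathbf{h}^{\star})$ achieving $Z^{\star}$. By definition it satisfies the UL constraints \eqref{eq:station_budget_bi-level}--\eqref{eq:station_binary_bi-level}, which coincide verbatim with the HPR constraints \eqref{eq:station_budget_constraint}--\eqref{eq:station_binary_constraint}. Moreover, $(\mathbf{x}^{\star},\mathbf{v}^{\star},\mathbf{h}^{\star})$ is a minimizer of MTAP$(\mathbf{y}^{\star})$ and therefore in particular \emph{feasible} for MTAP$(\mathbf{y}^{\star})$, so the link-flow, demand-conservation, charging-flow, and non-negativity constraints \eqref{eq:link_flow_constraint}--\eqref{eq:path_flow_nonnegative} carry over directly into the HPR.

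The only constraint that is not a literal copy is the station activation constraint: the LL problem enforces $v_l = 0$ whenever $y_l = 0$ via \eqref{eq:ev_activation_limit}, whereas the HPR uses the big-M linearization \eqref{eq:bigM_constraint}. Next I would verify that the former implies the latter provided $M$ is chosen as a valid upper bound on the total charging time any single station could ever absorb (for instance, a demand-based bound such as $M \ge \sum_{(r,s)} D_{rs}\cdot \max_{\pi,l}\gamma_l^\pi$). When $y_l^{\star} = 0$, \eqref{eq:ev_activation_limit} forces $v_l^{\star} = 0 \le M\cdot 0$; when $y_l^{\star} = 1$, the choice of $M$ ensures $v_l^{\star} \le M = M\cdot 1$. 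Hence the tuple is HPR-feasible, and evaluating \eqref{eq:station_upper_objective} at this point returns $Z^{\star}$, so minimizing the HPR yields $Z^{\mathrm{HP}} \le Z^{\star}$.

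The proof is conceptually routine, and the only genuine obstacle is the big-M argument: one must fix $M$ at a value large enough that no bi-level feasible flow pattern is truncated by \eqref{eq:bigM_constraint}. If $M$ were under-specified, the relaxation could become artificially restricted (or even drop the optimal $\mathbf{y}^{\star}$) and the inequality could fail. Beyond this modeling caveat, the result is the textbook observation that discarding the LL optimality requirement enlarges the feasible set of a minimization problem, and enlarging a feasible set cannot raise the minimum.
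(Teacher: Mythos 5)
Your proof is correct and follows essentially the same argument as the paper's: the HPR is obtained by dropping the follower-optimality requirement, so its feasible region contains that of the bi-level problem and minimizing the identical objective over the larger set cannot increase the optimum. Your explicit verification that the big-$M$ constraint \eqref{eq:bigM_constraint} does not cut off the bi-level optimal flows (provided $M$ bounds the maximum attainable station charging load) is a worthwhile refinement that the paper's proof passes over by simply asserting the two problems ``enforce the same hard constraints.''
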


\begin{proof}
Let $\mathcal F^{\text{BL}}$ and $\mathcal F^{\text{HP}}$ denote the feasible
sets of \eqref{eq:station_placement_bi-level} and
\eqref{eq:station_placement_bi-level2}, respectively.
The two problems share the same decision variables $(\mathbf y,\mathbf x, \mathbf{v})$ and enforce the same hard constraints.
The only element that distinguishes the bi-level model from the HPR is the follower optimality requirement: $\mathbf{v, x} \in \mathrm{MTAP}(\mathbf y)$ embedded in~\eqref{eq:station_placement_bi-level}.  Eliminating this requirement can only enlarge the set of feasible solutions.
Both problems minimize the same UL cost function
\eqref{eq:station_obj_bi-level} over their respective feasible sets. For any optimization problem, minimizing over a superset of feasible solutions cannot yield a larger objective value.  Hence $Z^{\mathrm{HP}} \leq Z^{\star}.$ 
\end{proof}

\subsection{Value Function Cuts}
If we solve the HPR, we may end up with a solution $(\mathbf{y}^1, \mathbf{v}^1, \mathbf{x}^1)$ where $(\mathbf{v}^1, \mathbf{x}^1) \notin \text{MTAP}(\mathbf{y}^1)$. This means the solution is not optimal for the LL.
Therefore, if we have $\mathbf{y}^1$ and solve the MTAP($\mathbf{y}$) problem, we can achieve a lower optimal value for the objective function with a different $(\mathbf{v}, \mathbf{x})$. This means that 
\begin{equation}\label{eq18}
    L(\mathbf{y}^1 , \mathbf{v}^1(\mathbf{h}^1), \mathbf{x}^1(\mathbf{h}^1)) > \min_{\mathbf{h}} L(\mathbf{y}^1, \mathbf{v}(\mathbf{h}), \mathbf{x}(\mathbf{h}))
\end{equation}
 Note that generally both $\mathbf{v, x}$ are derived based on one vector of $\mathbf{h}$, and in equation \eqref{eq18} $\mathbf{h}^1 \neq \mathbf{h}$. Based on $\mathbf{y}$, we have some feasible paths, not all of them. Let $\mathcal{H} (\mathbf{y})$ be the set of feasible path under $\mathbf{y}$.
$L(\mathbf{y}^1, \mathbf{v}^1(\mathbf{h}^1), \mathbf{x}^1(\mathbf{h}^1))$ is the value of the objective function at the LL by entering $(\mathbf{y}^1, \mathbf{v}^1(\mathbf{h}^1), \mathbf{x}^1(\mathbf{h}^1))$ and $L(\mathbf{y}^1, \mathbf{v}(\mathbf{h}), \mathbf{x}(\mathbf{h}))$ is the objective function of the optimal solution of the problem at the LL if we optimize it for the given $\mathbf{y}^1$.
So, to enforce the follower optimality, we will add the opposite of equation \eqref{eq18} to restrict $(\mathbf{v}, \mathbf{x})$. After adding this constraint to the problem \eqref{eq:station_placement_bi-level2}, we will have a problem which is the same as the original bi-level problem \eqref{eq:station_placement_bi-level}. Here is the HPR augmented with value function cuts:
\begin{subequations} \label{eq:station_placement_bi-level3}
\begin{alignat}{2}
    \min_{\mathbf{y, x, v}} \quad 
    & -\sum_{l\in\mathcal{C}}p \cdot v_l + w \cdot \sum_{(r,s)} \left(D_{rs} - \sum_{\pi \in \Pi_{rs} (\mathbf{y})} h_{\pi}\right)
    && \label{eq:station_upper_objective3} \\[6pt]
    \text{s.t.} \quad 
    & \sum_{l \in \mathcal{C}} C_l \cdot y_l \leq B,
    && \label{eq:station_budget_constraint3} \\[6pt]
    & 0 \leq y_l \leq 1, 
    && \quad \forall l \in \mathcal{C}
    \label{eq:station_binary_constraint3} \\[6pt]
    & v_l \leq M \cdot y_l, 
    && \quad \forall l \in \mathcal{C}
    \label{eq:bigM_constraint3} \\[6pt]
    & x_{ij} = \sum_{\pi \in \Pi (\mathbf{y})} \delta_{ij}^{\pi} h_{\pi}, 
    && \quad \forall (i, j) \in \mathcal{A}
    \label{eq:link_flow_constraint3} \\[6pt]
    & \sum_{\pi \in \Pi_{rs} (\mathbf{y})} h_{\pi} = d_{rs}(\mathbf{y}), 
    && \quad \forall (r,s) \in \mathcal{Z}^2
    \label{eq:demand_satisfaction_constraint3} \\[6pt]
    & v_l = \sum_{\pi \in \Pi (\mathbf{y})} \phi_{l}^{\pi} h_{\pi}, 
    && \quad \forall l \in \mathcal{C}
    \label{eq:charging_flow_constraint3} \\[6pt]
    & v_l \geq 0, 
    && \quad \forall l \in \mathcal{C}
    \label{eq:charging_nonnegative3} \\[6pt]
    & h_{\pi} \geq 0, 
    && \quad \forall \pi \in \Pi
    \label{eq:path_flow_nonnegative3}\\[6pt]
    & L(\mathbf{y}, \mathbf{v}(\mathbf{h}), \mathbf{x}(\mathbf{h})) \leq L(\mathbf{y}, \mathbf{v}^{\prime} (\mathbf{h}^\prime), \mathbf{x}^{\prime} (\mathbf{h}^\prime)), 
&& \quad \forall \mathbf{h}^\prime \in \mathcal{H}(\mathbf{y})
\label{eq:lower_level_optimality_constraint3}
\end{alignat}
\end{subequations}
Constraint \eqref{eq:lower_level_optimality_constraint3} represents the value function cuts. $\mathcal{H}(\mathbf{y})$ represents the set of feasible path flow vectors $\mathbf{h}^\prime$ under the network design $\mathbf{y}$, and $\mathbf{v}^{\prime} (\mathbf{h}^\prime)$, $\mathbf{x}^{\prime} (\mathbf{h}^\prime)$ are the corresponding charging and link flows.
In particular, for each $ \mathbf{h}^\prime \in \mathcal{H}(\mathbf{y})$, the derived aggregate flows \(\mathbf{v}^\prime(\mathbf{h}^\prime)\) and \(\mathbf{x}^\prime(\mathbf{h}^\prime)\) respect \(v_l^\prime = 0\) if \(y_l = 0\) for all \(l \in \mathcal{C}\), so that no site without an established charging station is assigned any charging demand. By restricting to flow configurations consistent with the given \(\mathbf{y}\), this formulation avoids infeasible cases such as \(v_l^\prime > 0\) when \(y_l = 0\).
\begin{theorem}\label{proposition2}
By adding value function cut \eqref{eq:lower_level_optimality_constraint3} to the HPR problem, problem \eqref{eq:station_placement_bi-level3} is equivalent to problem \eqref{eq:station_placement_bi-level}.
\end{theorem}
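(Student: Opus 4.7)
The plan is to establish equivalence by showing that the two problems have identical objective functions and that their feasible sets coincide, so their optimal values and optimal solution sets are the same. Since the upper-level objective \eqref{eq:station_upper_objective3} is identical to \eqref{eq:station_obj_bi-level}, it suffices to prove the set equality $\mathcal{F}^{\text{BL}} = \mathcal{F}^{\text{BL3}}$, where $\mathcal{F}^{\text{BL3}}$ denotes the feasible set of \eqref{eq:station_placement_bi-level3}. I would carry this out by two inclusions.

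For the forward direction $\mathcal{F}^{\text{BL}} \subseteq \mathcal{F}^{\text{BL3}}$, I would take an arbitrary $(\mathbf{y}, \mathbf{v}, \mathbf{x}, \mathbf{h}) \in \mathcal{F}^{\text{BL}}$ and verify it satisfies each constraint of \eqref{eq:station_placement_bi-level3}. The budget, binary, link-flow, demand-satisfaction, node-flow, and non-negativity constraints \eqref{eq:station_budget_constraint3}--\eqref{eq:path_flow_nonnegative3} are immediate since they also appear in $\text{MTAP}(\mathbf{y})$ and in the bi-level model. The big-$M$ constraint \eqref{eq:bigM_constraint3} follows from the activation constraint \eqref{eq:ev_activation_limit} of $\text{MTAP}(\mathbf{y})$ for $M$ chosen as any upper bound on $v_l$. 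The value function cut \eqref{eq:lower_level_optimality_constraint3} is exactly the definition of lower-level optimality: $(\mathbf{v}, \mathbf{x}) \in \text{MTAP}(\mathbf{y})$ means that $(\mathbf{v}, \mathbf{x})$ minimizes $L(\mathbf{y}, \cdot, \cdot)$ over all flow configurations consistent with $\mathbf{y}$, which directly yields $L(\mathbf{y}, \mathbf{v}(\mathbf{h}), \mathbf{x}(\mathbf{h})) \leq L(\mathbf{y}, \mathbf{v}'(\mathbf{h}'), \mathbf{x}'(\mathbf{h}'))$ for every $(\mathbf{v}'(\mathbf{h}'), \mathbf{x}'(\mathbf{h}')) \in \mathcal{H}(\mathbf{y})$.

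For the reverse direction $\mathcal{F}^{\text{BL3}} \subseteq \mathcal{F}^{\text{BL}}$, I would take $(\mathbf{y}, \mathbf{v}, \mathbf{x}, \mathbf{h}) \in \mathcal{F}^{\text{BL3}}$. Constraints \eqref{eq:link_flow_constraint3}--\eqref{eq:path_flow_nonnegative3} together with \eqref{eq:bigM_constraint3} (which implies $v_l = 0$ whenever $y_l = 0$) place $(\mathbf{v}, \mathbf{x})$ in the feasible region of $\text{MTAP}(\mathbf{y})$, i.e. in $\mathcal{H}(\mathbf{y})$. The value function cut \eqref{eq:lower_level_optimality_constraint3} then asserts that $L(\mathbf{y}, \mathbf{v}, \mathbf{x})$ is no larger than $L(\mathbf{y}, \mathbf{v}', \mathbf{x}')$ for every alternative $(\mathbf{v}', \mathbf{x}') \in \mathcal{H}(\mathbf{y})$, which is precisely the statement that $(\mathbf{v}, \mathbf{x})$ attains $\min_{\mathbf{h}'} L(\mathbf{y}, \mathbf{v}(\mathbf{h}'), \mathbf{x}(\mathbf{h}'))$. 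Hence $(\mathbf{v}, \mathbf{x}) \in \text{MTAP}(\mathbf{y})$, making $(\mathbf{y}, \mathbf{v}, \mathbf{x}, \mathbf{h})$ feasible for the bi-level problem \eqref{eq:station_placement_bi-level}. Combining both inclusions and the identity of objectives gives the equivalence.

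The main subtlety I expect will be to pin down precisely what $\mathcal{H}(\mathbf{y})$ contains, since the cut's strength depends entirely on this set being \emph{exactly} the feasible region of $\text{MTAP}(\mathbf{y})$. I would explicitly argue that any $(\mathbf{v}', \mathbf{x}')$ derived from a path-flow vector $\mathbf{h}'$ satisfying the conservation and activation conditions induced by $\mathbf{y}$ belongs to $\mathcal{H}(\mathbf{y})$, and conversely that every element of $\mathcal{H}(\mathbf{y})$ corresponds to such an $\mathbf{h}'$. Once this identification is secure, the cut \eqref{eq:lower_level_optimality_constraint3} coincides with the definition of optimality for $\text{MTAP}(\mathbf{y})$ and the equivalence follows with no further calculation.
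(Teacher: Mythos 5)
Your proposal is correct and follows essentially the same route as the paper's proof: both argue that the objectives coincide and then establish equality of the feasible sets by two inclusions, using the fact that the value function cut \eqref{eq:lower_level_optimality_constraint3} is precisely the statement of lower-level optimality over $\mathcal{H}(\mathbf{y})$. Your explicit attention to identifying $\mathcal{H}(\mathbf{y})$ with the feasible region of $\text{MTAP}(\mathbf{y})$ is a point the paper asserts without elaboration, but it does not change the argument.
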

\begin{proof}
It is sufficient to show equality of the feasible regions, since the objective functions in both problems are identical. Suppose $(\mathbf{y}, \mathbf{v}, \mathbf{x})$ is feasible for problem \eqref{eq:station_placement_bi-level3}. This means $(\mathbf{y}, \mathbf{v}, \mathbf{x})$ satisfies the value function cuts in constraint \eqref{eq:lower_level_optimality_constraint3}. Therefore, under the same $\mathbf{y}$, the $L(\mathbf{y}, \mathbf{v}(\mathbf{h}), \mathbf{x}(\mathbf{h}))$ is less than or equal to the $L(\mathbf{y}, \mathbf{v}^{\prime} (\mathbf{h}^\prime), \mathbf{x}^{\prime}(\mathbf{h}^\prime))$  for all $ (\mathbf{v}^\prime(\mathbf{h}^\prime) , \mathbf{x}^\prime (\mathbf{h}^\prime))$ that are dervived form $\mathbf{h}^\prime \in \mathcal{H}(\mathbf{y})$. This directly means $(\mathbf{v}, \mathbf{x})$ minimizes $L(\mathbf{y}, \mathbf{v}(\mathbf{h}), \mathbf{x}(\mathbf{h}))$ given the decision $\mathbf{y}$. Thus, we can ensure the solution is feasible for the original bi-level problem \eqref{eq:station_placement_bi-level3}. Conversely, consider any point $(\mathbf{y}, \mathbf{v}, \mathbf{x})$ feasible for the original bi-level problem \eqref{eq:station_placement_bi-level3}. By the definition of LL optimality, $(\mathbf{v}, \mathbf{x})$ minimizes $L(\mathbf{y}, \mathbf{v}(\mathbf{h}), \mathbf{x}(\mathbf{h}))$ given the decision $\mathbf{y}$. Therefore, constraint \eqref{eq:lower_level_optimality_constraint3} is trivially satisfied, implying feasibility for problem \eqref{eq:station_placement_bi-level3}. Hence, both problems have identical feasible sets and are thus equivalent. 
\end{proof}

Constraint \eqref{eq:lower_level_optimality_constraint3}  requires evaluating all possible flow configurations \((\mathbf{v}^\prime, \mathbf{x}^\prime\)), which is computationally intractable. To overcome this, we adopt a cutting plane strategy that iteratively adds only a subset of these constraints. Specifically, in each iteration, we generate a valid cut based on a carefully chosen pair \((\mathbf{v}^\mathrm{f}, \mathbf{x}^\mathrm{f})\). To construct this subset, we leverage the solution of the MTAP$(\mathbf{y})$ problem, which represents the optimal response of the LL equilibrium model for a given network design \(\mathbf{y}\). Since MTAP$(\mathbf{y})$ provides the minimal value of the LL objective function under \(\mathbf{y}\), using its solution to define \((\mathbf{v}^\mathrm{f}, \mathbf{x}^\mathrm{f})\) ensures that the resulting inequality is an effective value function cut. When the HPR gives us a solution $(\mathbf{y}^1, \mathbf{v}^1, \mathbf{x}^1)$ where $(\mathbf{v}, \mathbf{x}) \notin \arg \min_{\mathbf{v}, \mathbf{x} }L(\mathbf{y}^1, \mathbf{v}(\mathbf{h}), \mathbf{x}(\mathbf{h}))$, let $(\mathbf{v}^\mathrm{f}, \mathbf{x}^\mathrm{f}) \in \text{MTAP}(\mathbf{y}^1)$, then if we add the following cut
\begin{equation}\label{eq7}
    L(\mathbf{y}, \mathbf{v}(\mathbf{h}), \mathbf{x}(\mathbf{h})) \leq L(\mathbf{y}, \mathbf{v}^\mathrm{f}(\mathbf{h}^\mathrm{f}), \mathbf{x}^\mathrm{f}(\mathbf{h}^\mathrm{f})) \quad \forall (\mathbf{v}^\mathrm{f}(\mathbf{h}^\mathrm{f}) , \mathbf{x}^\mathrm{f} (\mathbf{h}^\mathrm{f})) \in \text{MTAP}(\mathbf{y})
\end{equation}
to the HPR, it will exclude answers like $(\mathbf{y}^1, \mathbf{v}^1, \mathbf{x}^1)$.
If we add cuts sequentially, this means that we will exclude the $\mathbf{v} \in \mathbf{V}$ and $\mathbf{x} \in \mathbf{X}$, which are feasible for the HPR problem, but not optimal or feasible for the LL problem. By adding these cuts, we will tighten the feasible region of the HPR problem, therefore achieving a lower gap between the lower bound and the original problem solution. This means that we will have a set of $(\mathbf{v}^\mathrm{f}(i), \mathbf{x}^\mathrm{f}(i)) $ which are built sequentially over the iterations. 

While value function cuts are useful for iteratively refining the feasible region of the HPR, their applicability is conditioned on the feasibility of the follower solution under the current UL decision \(\mathbf{y}\). In particular, a previously generated pair \((\mathbf{v}^\mathrm{f}, \mathbf{x}^\mathrm{f}) \in \text{MTAP}(\mathbf{y}^\prime)\) may no longer be valid if certain charging stations used in that solution are deactivated in the new \(\mathbf{y}\). To prevent invalid comparisons and over constraining the feasible region, we introduce a relaxation term using big $M$ constants. These terms deactivate the cut whenever a required charging station is not selected, thus ensuring that the value function cut only applies when the previous follower solution remains feasible. Let $M_l$ be a sufficiently large constant for each candidate location $l \in \mathcal{C}$, and define the set $\mathcal{C}^\mathrm{f} := \{ l \in \mathcal{C} : v_l^\mathrm{f} > 0 \}$, i.e., the subset of candidate locations that were used in the previous follower solution. Let $\mathcal{F}$ denote the index set of previously generated follower solutions, where each $i \in \mathcal{F}$ corresponds to a solution $(\mathbf{v}^\mathrm{f}(i), \mathbf{x}^\mathrm{f}(i), \mathbf{h}^\mathrm{f}(i))$ obtained from solving MTAP$(\mathbf{y})$ at some iteration of the algorithm. The relaxed value function cuts become:
\begin{equation}\label{cons9}
    L(\mathbf{y}, \mathbf{v}(\mathbf{h}), \mathbf{x}(\mathbf{h})) \leq L(\mathbf{y}, \mathbf{v}^\mathrm{f}(i), \mathbf{x}^\mathrm{f}(i)) + \sum_{l \in \mathcal{C}} M_l (1 - y_l), \quad \forall i \in \mathcal{F}
\end{equation}
This form ensures that if any $l \in \mathcal{C}$ is closed in the current $\mathbf{y}$ (i.e., $y_l = 0$), then the corresponding term $M_l(1 - y_l)$ becomes active and relaxes the constraint, which will effectively deactivate the cut. On the other hand, when all previously used stations are still open ($y_l = 1$ for all $l \in \mathcal{C}$), the big-$M$ terms vanish, and the cut enforces that the current LL solution is not strictly better than the previously observed one. 
\begin{subequations} \label{eq:station_placement_bi-level4}
\begin{alignat}{2}
    \min_{\mathbf{y, x, v}} \quad 
    & -\sum_{l\in\mathcal{C}}p \cdot v_l + w \cdot \sum_{(r,s)} \left(D_{rs} - \sum_{\pi \in \Pi_{rs} (\mathbf{y})} h_{\pi}\right)
    && \label{eq:station_upper_objective4} \\[6pt]
    \text{s.t.} \quad 
    & \sum_{l \in \mathcal{C}} C_l \cdot y_l \leq B,
    && \label{eq:station_budget_constraint4} \\[6pt]
    & 0 \leq y_l \leq 1, 
    && \quad \forall l \in \mathcal{C}
    \label{eq:station_binary_constraint4} \\[6pt]
    & v_l \leq M \cdot y_l, 
    && \quad \forall l \in \mathcal{C}
    \label{eq:bigM_constraint4} \\[6pt]
    & x_{ij} = \sum_{\pi \in \Pi (\mathbf{y})} \delta_{ij}^{\pi} h_{\pi}, 
    && \quad \forall (i, j) \in \mathcal{A}
    \label{eq:link_flow_constraint4} \\[6pt]
    & \sum_{\pi \in \Pi_{rs} (\mathbf{y})} h_{\pi} = d_{rs}(\mathbf{y}), 
    && \quad \forall (r,s) \in \mathcal{Z}^2
    \label{eq:demand_satisfaction_constraint4} \\[6pt]
    & v_l = \sum_{\pi \in \Pi (\mathbf{y})} \phi_{l}^{\pi} h_{\pi}, 
    && \quad \forall l \in \mathcal{C}
    \label{eq:charging_flow_constraint4} \\[6pt]
    & v_l \geq 0, 
    && \quad \forall l \in \mathcal{C}
    \label{eq:charging_nonnegative4} \\[6pt]
    & h_{\pi} \geq 0, 
    && \quad \forall \pi \in \Pi
    \label{eq:path_flow_nonnegative4}\\[6pt]
    & L(\mathbf{y}, \mathbf{v}(\mathbf{h}), \mathbf{x}(\mathbf{h})) \leq L(\mathbf{y}, \mathbf{v}^{\mathrm{f}} (i), \mathbf{x}^{\mathrm{f}} (i))+ \sum_{l \in \mathcal{C}} M_l (1 - y_l)
    && \quad \forall i \in \mathcal{F}.
    \label{eq:lower_level_optimality_constraint4}
\end{alignat}
\end{subequations}
\begin{theorem}
After adding cut \eqref{eq:lower_level_optimality_constraint4} with a specific $(\mathbf{v}^\mathrm{f}, \mathbf{x}^\mathrm{f})$, the optimal solution of problem \eqref{eq:station_placement_bi-level4} is still a lower bound on the problem \eqref{eq:station_placement_bi-level}.
\end{theorem}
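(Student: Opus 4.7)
The plan is to prove the bound via feasible set inclusion. Since problems \eqref{eq:station_placement_bi-level} and \eqref{eq:station_placement_bi-level4} share the identical upper level objective, it suffices to show that every triple $(\mathbf{y},\mathbf{v},\mathbf{x})$ feasible for the original bi-level problem is also feasible for the cut-augmented HPR. Minimizing the same objective over a superset can only produce a value no larger than the optimum, so this inclusion would immediately yield the desired lower-bound conclusion. This is exactly the same logical structure used in Proposition~\ref{prop:HP_lower_bound}, extended to accommodate the new constraint.

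Concretely, I would fix an arbitrary bi-level-feasible $(\mathbf{y},\mathbf{v},\mathbf{x})$. All hard constraints \eqref{eq:station_budget_constraint4}--\eqref{eq:path_flow_nonnegative4} are shared with \eqref{eq:station_placement_bi-level}, so the only thing to verify is that the new inequality \eqref{eq:lower_level_optimality_constraint4} holds at this point. I would split the argument into two cases according to whether the charging stations used by the reference follower solution $(\mathbf{v}^\mathrm{f},\mathbf{x}^\mathrm{f})$ remain open under the current design $\mathbf{y}$.

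In Case 1, suppose there exists some $l \in \mathcal{C}^\mathrm{f}$ with $y_l=0$. Then the relaxation term contains $M_l(1-y_l)=M_l$. Choosing each $M_l$ larger than any attainable value of $L(\mathbf{y},\mathbf{v},\mathbf{x})-L(\mathbf{y},\mathbf{v}^\mathrm{f},\mathbf{x}^\mathrm{f})$ over the HPR-feasible region (a valid choice since the Beckmann-type objective is bounded on the compact feasible set) makes the cut trivially satisfied. In Case 2, every $l \in \mathcal{C}^\mathrm{f}$ has $y_l=1$, so every path $\pi$ carrying flow in $\mathbf{h}^\mathrm{f}$ uses only stations active under $\mathbf{y}$; consequently $\mathbf{h}^\mathrm{f}$ is still battery-feasible for the current design and $(\mathbf{v}^\mathrm{f},\mathbf{x}^\mathrm{f})$ lies in the MTAP$(\mathbf{y})$ feasible region. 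Because bi-level feasibility forces $(\mathbf{v},\mathbf{x})\in\mathrm{MTAP}(\mathbf{y})$, the follower-optimality of $(\mathbf{v},\mathbf{x})$ gives $L(\mathbf{y},\mathbf{v},\mathbf{x})\le L(\mathbf{y},\mathbf{v}^\mathrm{f},\mathbf{x}^\mathrm{f})$, and the non-negative big-$M$ residual terms only relax this further. In either case \eqref{eq:lower_level_optimality_constraint4} holds, so the inclusion is established and the bound follows.

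The main obstacle is the justification in Case 2 that $(\mathbf{v}^\mathrm{f},\mathbf{x}^\mathrm{f})$ remains genuinely MTAP$(\mathbf{y})$-feasible whenever all the stations it uses stay active, because this is what makes the follower-optimality comparison legitimate. This requires invoking monotonicity of the feasible path set in $\mathbf{y}$, namely that opening additional stations (equivalently, not closing any station on a path previously traversed) preserves battery-feasibility of every path that was feasible before. A secondary but quantitatively important issue is the explicit selection of the constants $M_l$; the argument in Case 1 is only sound if they dominate the worst-case gap, which I would handle by an a priori upper bound on the Beckmann-like objective over the compact polytope defined by \eqref{eq:station_budget_constraint4}--\eqref{eq:path_flow_nonnegative4}.
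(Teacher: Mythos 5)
Your proposal is correct, but it takes a genuinely different (and more detailed) route than the paper. The paper's proof is a two-line chaining argument: the cuts \eqref{eq:lower_level_optimality_constraint4} are a subset of the value function cuts \eqref{eq:lower_level_optimality_constraint3}, each further weakened by the nonnegative big-$M$ residual, so problem \eqref{eq:station_placement_bi-level4} is a relaxation of problem \eqref{eq:station_placement_bi-level3}, which Proposition~\ref{proposition2} has already shown to be equivalent to the bi-level problem; a relaxation of an equivalent reformulation gives a lower bound. You instead prove the feasible-set inclusion directly, with an explicit case split on whether the stations in $\mathcal{C}^\mathrm{f}$ remain open. What your route buys is a justification the paper leaves implicit: a stored pair $(\mathbf{v}^\mathrm{f},\mathbf{x}^\mathrm{f})$ generated under an earlier design need not belong to $\mathcal{H}(\mathbf{y})$ for the current $\mathbf{y}$, so the cut in \eqref{eq:lower_level_optimality_constraint4} is not literally ``one of the cuts in \eqref{eq:lower_level_optimality_constraint3}''; your Case~1 shows the big-$M$ term deactivates exactly those cuts, and your Case~2 shows the remaining ones are dominated by cuts of \eqref{eq:station_placement_bi-level3}. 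Your explicit sizing requirement on $M_l$ is also a genuine contribution the paper omits. One residual caveat in your Case~2 (inherited from the paper's own construction rather than introduced by you): MTAP$(\mathbf{y})$-feasibility of $(\mathbf{v}^\mathrm{f},\mathbf{x}^\mathrm{f})$ requires not only that every path carrying flow in $\mathbf{h}^\mathrm{f}$ remain battery-feasible, but also that the served demand equals $d_{rs}(\mathbf{y})$; since opening additional stations can turn a previously infeasible OD pair feasible and raise $d_{rs}(\mathbf{y})$ above the demand served by $\mathbf{h}^\mathrm{f}$, the follower-optimality comparison in Case~2 needs either this possibility ruled out or the big-$M$ deactivation extended to cover it. Worth flagging, but it does not reflect a defect in your reasoning relative to the paper's, whose proof does not engage with the issue at all.
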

\begin{proof}
    In constraint \eqref{eq:lower_level_optimality_constraint4}, we are just considering a subset of $(\mathbf{v}, \mathbf{x})$ compared to the constraint \eqref{eq:lower_level_optimality_constraint3} in problem \eqref{eq:station_placement_bi-level3}, which based on Proposition \ref{proposition2} is equivalent to the original bi-level problem \eqref{eq:station_placement_bi-level}. So the optimal solution of this problem is still a lower bound for problem \eqref{eq:station_placement_bi-level}.
\end{proof}

\subsubsection{Outer Approximation for Value Function Cuts}
To remove the nonlinearity in constraint~\eqref{eq:lower_level_optimality_constraint4}, we apply the OA method. This method was originally proposed by \citet{Duran1986}. We can use OA when the constraint involves a convex function. The follower’s objective function \(L(\mathbf{y}, \mathbf{v}(\mathbf{h}), \mathbf{x}(\mathbf{h}))\), derived from a convex UE traffic assignment problem, is convex in the continuous variables \(\mathbf{v}\) and \(\mathbf{x}\). The OA comes from linearizing the convex functions at previously found solutions, which generates supporting hyperplanes. These hyperplanes will iteratively refine the objective function \cite{Duran1986}. Therefore, at a given point \((\hat{\mathbf{v}}, \hat{\mathbf{x}})\), we can linearize the follower’s objective by constructing the first order Taylor expansion.
By iteratively adding such cuts from previously computed follower solutions, we progressively refine a linear OA of the nonlinear constraint. As demonstrated in \citet{Duran1986}, this iterative process ensures that the collection of linearizations converges to fully approximate the original nonlinear feasible region. Although many cuts may be needed to tightly approximate the feasible region, this method provides a tractable way to enforce follower optimality without solving the full nonlinear bi-level problem. This OA technique enables us to convert the bi-level model into a sequence of MILPs that are significantly more computationally efficient to solve.
The OA of the left-hand side of constraint \eqref{eq:lower_level_optimality_constraint4} is
\begin{equation}
    OA[L(\mathbf{y}, \mathbf{v}(\mathbf{h}), \mathbf{x}(\mathbf{h}))] \leq L(\mathbf{y}, \mathbf{v}(\mathbf{h}), \mathbf{x}(\mathbf{h}))
\end{equation}
We will derive the OA around $(\mathbf{y}^1, \mathbf{v}^1, \mathbf{x}^1)$. Note that $L$ is composed of three parts: the first two parts are nonlinear, while the third part (charging cost) is linear. We will derive the OA for both nonlinear parts since the OA of a linear term is equal to itself. 
For the first term, if we consider the left-hand side of constraint \eqref{eq:lower_level_optimality_constraint4} and derive the OA:
\begin{equation}\label{eq:long_constraint}
OA\left(\sum_{(i, j) \in \mathcal{A}} \int_{0}^{x_{ij}} t_{ij}(x)dx \right)
 \leq \sum_{(i, j) \in \mathcal{A}} \int_{0}^{x_{ij}} t_{ij}(x)dx
\end{equation}
The OA of a convex function $f(x)$ at a point $x^1$ is given by the first-order Taylor expansion: $f(x^1) + \nabla f(x^1)^\top (x - x^1) \leq f(x)$. Applying this to the link travel time integral, where the derivative of $\int_{0}^{x_{ij}} t_{ij}(x)dx$ with respect to $x_{ij}$ is $t_{ij}(x_{ij})$, we obtain:
\begin{equation}\label{eq2011}
    \sum_{(i, j) \in \mathcal{A}} \int_{0}^{x_{ij}^1} t_{ij}(x)dx  
    + \sum_{(i, j) \in \mathcal{A}} t_{ij}(x_{ij}^1) \cdot (x_{ij} - x_{ij}^1) 
     \leq \sum_{(i, j) \in \mathcal{A}} \int_{0}^{x_{ij}} t_{ij}(x)dx
\end{equation}
Equation \eqref{eq2011} is separable by link $(i, j)$, meaning the inequality can be decomposed into independent terms for each link. To facilitate this decomposition, we introduce auxiliary variables $\eta_{ij}$ for each link $(i,j) \in \mathcal{A}$, which serve as upper bounds on the true integral $\int_{0}^{x_{ij}} t_{ij}(x)dx$. By requiring that the OA at any linearization point $x_{ij}^1$ is bounded by $\eta_{ij}$, we can write the link-specific OA constraints as:
\begin{equation}\label{eq23}
    \int_{0}^{x_{ij}^1} t_{ij}(x)dx + t_{ij}(x_{ij}^1) \cdot (x_{ij} - x_{ij}^1)  \leq \eta_{ij} \quad \forall x_{ij}^1
\end{equation}
These constraints ensure that $\eta_{ij} \geq \int_{0}^{x_{ij}} t_{ij}(x)dx$ for all $(i,j) \in \mathcal{A}$, and consequently:
\begin{equation}\label{eq:eta_sum}
    \sum_{(i, j) \in \mathcal{A}} \eta_{ij} \geq OA\left(\sum_{(i, j) \in \mathcal{A}} \int_{0}^{x_{ij}} t_{ij}(x)dx \right)
\end{equation}
Similarly, for the station congestion term, we derive the OA:
\begin{equation}\label{eq:station_oa_ineq}
OA\left(\sum_{l \in \mathcal{C}} \int_{0}^{v_l} t_l(v)dv \right)
 \leq \sum_{l \in \mathcal{C}} \int_{0}^{v_l} t_l(v)dv
\end{equation}
Applying the first-order Taylor expansion, where the derivative of $\int_{0}^{v_l} t_l(v)dv$ with respect to $v_l$ is $t_l(v_l)$:
\begin{equation}\label{eq:station_oa_expanded1}
    \sum_{l \in \mathcal{C}} \int_{0}^{v_l^1} t_l(v)dv  
    + \sum_{l \in \mathcal{C}} t_l(v_l^1) \cdot (v_l - v_l^1) 
     \leq \sum_{l \in \mathcal{C}} \int_{0}^{v_l} t_l(v)dv
\end{equation}
Equation \eqref{eq:station_oa_expanded1} is separable by station $l$. We introduce auxiliary variables $\psi_l$ for each station $l \in \mathcal{C}$, which serve as upper bounds on the true integral $\int_{0}^{v_l} t_l(v)dv$. By requiring that the OA at any linearization point $v_l^1$ is bounded by $\psi_l$, we can write the station-specific OA constraints as:
\begin{equation}\label{eq:station_oa_cut}
    \int_{0}^{v_l^1} t_l(v)dv + t_l(v_l^1) \cdot (v_l - v_l^1)  \leq \psi_l \quad \forall v_l^1
\end{equation}
These constraints ensure that $\psi_l \geq \int_{0}^{v_l} t_l(v)dv$ for all $l \in \mathcal{C}$, and consequently:
\begin{equation}\label{eq:psi_sum}
    \sum_{l \in \mathcal{C}} \psi_l \geq OA\left(\sum_{l \in \mathcal{C}} \int_{0}^{v_l} t_l(v)dv \right)
\end{equation}
We now show how to combine these results to linearize constraint \eqref{eq:lower_level_optimality_constraint4}. Recall that the value function cut requires:
\begin{equation}\label{eq:vf_recall}
    L(\mathbf{y}, \mathbf{v}(\mathbf{h}), \mathbf{x}(\mathbf{h})) \leq L(\mathbf{y}, \mathbf{v}^\mathrm{f}, \mathbf{x}^\mathrm{f}) + \sum_{l \in \mathcal{C}} M_l (1 - y_l)
\end{equation}
Both sides of constraint \eqref{eq:lower_level_optimality_constraint4} are nonlinear, but the right-hand side is a constant since it is evaluated at the UE optimal solution $(\mathbf{x}^\mathrm{f}, \mathbf{v}^\mathrm{f})$. Therefore, we derive the OA only for the left-hand side, and only for the nonlinear components (the charging cost term $\sum_{\pi \in \Pi(\mathbf{y})} c^{\pi} h_{\pi}$ is already linear).
Expanding the left-hand side using the definition of $L(\cdot)$ and applying OA to the nonlinear terms:
\begin{multline}\label{eq:long_constraint2}
OA\left( \sum_{(i, j) \in \mathcal{A}} \int_{0}^{x_{ij}} t_{ij}(x)dx \right)
    + OA\left( \sum_{l \in \mathcal{C}} \int_{0}^{v_l} t_l(v) dv \right) + \sum_{\pi \in \Pi (\mathbf{y})} c^{\pi} h_{\pi} \\[6pt]
\leq \sum_{(i, j) \in \mathcal{A}} \int_{0}^{x_{ij}^\mathrm{f}} t_{ij}(x)dx 
+ \sum_{l \in \mathcal{C}} \int_{0}^{v_l^\mathrm{f}} t_l(v) dv + \sum_{\pi \in \Pi (\mathbf{y})} c^{\pi} h_{\pi}^\mathrm{f} + \sum_{l \in \mathcal{C}} M_l (1 - y_l)
\end{multline}
From inequalities \eqref{eq:eta_sum} and \eqref{eq:psi_sum}, we have established that the auxiliary variables provide upper bounds on the OA terms:
\begin{equation}\label{eq:oa_bounds}
    OA\left( \sum_{(i, j) \in \mathcal{A}} \int_{0}^{x_{ij}} t_{ij}(x)dx \right) \leq \sum_{(i, j) \in \mathcal{A}} \eta_{ij}, \quad OA\left( \sum_{l \in \mathcal{C}} \int_{0}^{v_l} t_l(v) dv \right) \leq \sum_{l \in \mathcal{C}} \psi_l
\end{equation}
Substituting these bounds into \eqref{eq:long_constraint2}, we obtain the linearized value function cut:
\begin{equation}\label{eq:VFcut_bigM}
\sum_{(i, j) \in \mathcal{A}} \eta_{ij} + \sum_{l \in \mathcal{C}} \psi_l + \sum_{\pi \in \Pi (\mathbf{y})} c^{\pi} h_{\pi}
\leq  \sum_{(i, j) \in \mathcal{A}} \int_{0}^{x_{ij}^\mathrm{f}} t_{ij}(x)dx + \sum_{l \in \mathcal{C}} \int_{0}^{v_l^\mathrm{f}} t_l(v) dv + \sum_{\pi \in \Pi (\mathbf{y})} c^{\pi} h_{\pi}^\mathrm{f} + \sum_{l \in \mathcal{C}} M_l (1 - y_l)
\end{equation}
Therefore, we have approximated cut \eqref{eq:lower_level_optimality_constraint4} in problem \eqref{eq:station_placement_bi-level4} with the combination of cuts \eqref{eq23}, \eqref{eq:station_oa_cut}, and \eqref{eq:VFcut_bigM}. The auxiliary variables $\eta_{ij}$ and $\psi_l$ are constrained by the OA cuts \eqref{eq23} and \eqref{eq:station_oa_cut}, which are iteratively added at each linearization point encountered during the algorithm. This approach transforms the nonlinear value function constraint into a collection of linear constraints that can be efficiently handled by MILP solvers.
The final linearized HPR will be the following:
\begin{subequations} \label{eq:station_placement_bi-level5}
\begin{alignat}{2}
    \min_{\mathbf{y, x, v}, \pmb{\eta}, \pmb{\psi}} \quad 
    & -\sum_{l \in \mathcal{C}} p \cdot v_l + w \cdot \sum_{(r,s)} \left(D_{rs} - \sum_{\pi \in \Pi_{rs} (\mathbf{y})} h_{\pi}\right)
    && \label{eq:station_upper_objective5} \\[6pt]
    \text{s.t.} \quad 
    & \sum_{l \in \mathcal{C}} C_l \cdot y_l \leq B,
    && \label{eq:station_budget_constraint5} \\[6pt]
    & 0 \leq y_l \leq 1, 
    && \quad \forall l \in \mathcal{C}
    \label{eq:station_binary_constraint5} \\[6pt]
    & v_l \leq M \cdot y_l, 
    && \quad \forall l \in \mathcal{C}
    \label{eq:bigM_constraint5} \\[6pt]
    & x_{ij} = \sum_{\pi \in \Pi (\mathbf{y})} \delta_{ij}^{\pi} h_{\pi}, 
    && \quad \forall (i, j) \in \mathcal{A}
    \label{eq:link_flow_constraint5} \\[6pt]
    & \sum_{\pi \in \Pi_{rs} (\mathbf{y})} h_{\pi} = d_{rs}(\mathbf{y}), 
    && \quad \forall (r,s) \in \mathcal{Z}^2
    \label{eq:demand_satisfaction_constraint5} \\[6pt]
    & v_l = \sum_{\pi \in \Pi (\mathbf{y})} \phi_l^{\pi} h_{\pi}, 
    && \quad \forall l \in \mathcal{C}
    \label{eq:charging_flow_constraint5} \\[6pt]
    & v_l \geq 0, 
    && \quad \forall l \in \mathcal{C}
    \label{eq:charging_nonnegative5} \\[6pt]
    & h_{\pi} \geq 0, 
    && \quad \forall \pi \in \Pi
    \label{eq:path_flow_nonnegative5} \\[6pt]
    & \int_{0}^{x_{ij}^1} t_{ij}(x)dx + t_{ij} (x^1)  \cdot (x_{ij} - x_{ij}^1)  \leq \eta_{ij}, 
    && \quad \forall \mathbf{x}^1
    \label{eq:link_oa_constraint5} \\[6pt]
    & \int_{0}^{v_l^1} t_l(v)dv + t_l(v^1) \cdot (v_l - v_l^1)  \leq \psi_l, 
    && \quad \forall \mathbf{v}^1
    \label{eq:station_oa_constraint5} \\[6pt]
    & \sum_{(i,j)\in \mathcal{A}} \eta_{ij} + \sum_{l \in \mathcal{C}} \psi_l + \sum_{\pi \in \Pi (\mathbf{y})} c^{\pi} h_{\pi} \notag\\
    &\leq  \sum_{(i,j)\in \mathcal{A}} \int_{0}^{x_{ij}^\mathrm{f}} t_{ij}(x)dx 
    + \sum_{l \in \mathcal{C}} \int_{0}^{v_l^\mathrm{f}} t_l(v) dv 
    + \sum_{\pi \in \Pi (\mathbf{y})} c^{\pi} h_{\pi}^\mathrm{f}
    + \sum_{l \in \mathcal{C}} M_l (1 - y_l)
    && \quad \forall (\mathbf{x}^\mathrm{f}, \mathbf{v}^\mathrm{f}, \mathbf{h}^\mathrm{f}) \in \mathcal{F}\label{vfcut5}
\end{alignat}
\end{subequations}

\section{Branch-and-Price-and-Cut Algorithm}
We introduce a BPC algorithm to solve our problem. We have a single tree in which we obtain the LB by solving problem \eqref{eq:station_placement_bi-level5}. To obtain a feasible solution (and thus an upper bound) for any fixed leader decision \(\mathbf{y}\), we solve MTAP\((\mathbf{y})\). This problem corresponds to the LL UE problem, where users choose their routes and charging stations based on the charging infrastructure \(\mathbf{y}\) and the resulting travel costs. The solution to MTAP\((\mathbf{y})\) yields a set of path flows \(\mathbf{h}^\mathrm{f}\), link flows \(\mathbf{x}^\mathrm{f}\), and station demands \(\mathbf{v}^\mathrm{f}\) that are optimal from the users' perspective and feasible under the given \(\mathbf{y}\). We then evaluate the UL objective using this solution, which gives a feasible objective value corresponding to the bi-level problem. Since this value respects all constraints and represents a valid leader and follower interaction, it provides a valid upper UB in our B\&B framework.

Let $k$ denote a node in the BPC and its branches. We will use the labeling scheme to keep track of the candidate location to place the charging station in a set of $C_2$. We will define two sets, one for locations where the decision is not to place a charging station $(y_l = 0)$ and another for locations where we decided to place a charging station ($y_l = 1$). These sets are defined for each node in the BPC tree.
\begin{subequations}
\begin{align}
    C_2^0 (k) = \{ l \in C_2: y_l ^k = 0 \}\\
    C_2^1 (k) = \{ l \in C_2 : y_l ^k = 1 \}
    \end{align}    
\end{subequations}
The locations where we want to decide whether to build a charging station or not are a subset of all of the locations:
\begin{equation}
    (C_2^0 (k)\cup C_2^1 (k) ) \subseteq C_2
\end{equation}
At the root node of the tree, both sets are empty. We will perform a check process for each node $k$ of the tree as shown below
\begin{itemize}
    \item if $\sum_{l \in C_2^1(k)} c_l . y_l(k) > B$, the node of BPC is infeasible since the budget constraint is not satisfied.
    \item if $|C_2 ^ 0 (k) \cup C_2 ^1 (k)| = |C_2|$, all the locations are fixed.
    \item if $B - \sum_{l \in C_2^1(k)} c_l . y_l(k) < 
\min \left\{ c_l : l \in C_2 \setminus \left( C_2^0(k) \cup C_2^1(k) \right) \right\}$, then there is not enough budget to build a new charging station at unfixed location $l$, so for all of the remaining locations $y_{l}(k) = 0$, and the BPC node is fixed.
    \item else the BPC node is unfixed.
\end{itemize}
If the result of this checking process is an unfixed node of BPC, we will solve the LB using problem \eqref{eq:station_placement_bi-level5}. If the result is fixed, we will obtain a UB by solving the LL problem  MTAP ($\mathbf{y}$). Algorithm \ref{BPCAlgorithm} shows the solution algorithm, and it mentions Column Generation (CG), which will be discussed in the next section. $\mathcal{B}$ is the set of nodes of B$\&$B tree. 
\subsection{Solution Algorithm}
\begin{algorithm}[H]
\caption{BPC Algorithm}
\label{BPCAlgorithm}
\DontPrintSemicolon

Initialize \( UB \gets \infty \), \( LB \gets - \infty \), \( \mathcal{B} \gets \emptyset \)\;
$LB_0$, $\mathbf{y}^0$, $\mathbf{v}^0 \gets$ Solve HPR~\eqref{eq:station_placement_bi-level5} via CG Algorithm \ref{alg:column_generation}\;
Add root node $k = 0 $ of B\&B tree to $\mathcal{B}$\\
\While{\( \mathcal{B} \neq \emptyset \)}{
    remove node \( k \) with minimum LB and \emph{check($k$)}\;
    
    {}{
        \If{check($k$) = fixed}{
            \( UB_k \gets \text{Solve MTAP}(\mathbf{y}^k) \)\;
            \( UB \gets \min(UB_k, UB) \)\;
            Add VF cuts \eqref{eq:station_oa_constraint5}, \eqref{eq:link_oa_constraint5}, \eqref{vfcut5}\;
            \If{\( LB_k \ge UB \)}{
                continue\;
            }
        }
        \ElseIf{check(\( k \)) = unfixed}{
            $LB_k$, $\mathbf{y}^k$, $\mathbf{v}^k \gets$ Solve HPR~\eqref{eq:station_placement_bi-level5} at node \( k \) via CG Algorithm \ref{alg:column_generation}\;
            
            \If{HPR infeasible or \( LB_k \ge UB \)}{
                continue\;
            }
            \Else{
                \( C_2^{\text{frac}}(k) \gets \{ l \in C_2 : y_l^k \text{ fractional} \} \)\;
                Branch on a fractional variable \( y_l^k \in C_2^{\text{frac}}(k) \) with highest $v_l$: add two child nodes to \( \mathcal{B} \) with additional constraints \( y_l^{k+1} = 1 \) and \( y_l^{k+1} = 0 \), respectively.

            }
        }
        \ElseIf{check(\( k \))= infeasible}{
            continue\;
        }
        Update $LB$ based on active BB nodes: \( LB \gets \min\{\text{LB}_k : k{\text{ is active}} \} \)\ \\
        \If{\( UB - LB \le \epsilon \)}{
            Stop\;
        }
    }
}
\end{algorithm}
\begin{theorem}
    The BPC algorithm \ref{BPCAlgorithm} solves the DNDP to optimality in a finite number of iterations.
\end{theorem}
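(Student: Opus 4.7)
The plan is to combine three ingredients: (i) finiteness of the branching tree, (ii) validity of the lower and upper bounds computed at each node (which reduces to the earlier propositions together with the standard convergence of outer approximation and column generation), and (iii) the classical correctness argument for branch-and-bound under valid bounds and an exhaustive branching rule.

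First I would establish finiteness of the enumeration tree. The only branching variables are the binaries $y_l$ for $l\in\mathcal{C}$, and at each unfixed node we branch by fixing one such $y_l$ to $0$ and $1$ in two children. The \emph{check} procedure classifies every node as either infeasible (budget violated), fixed (either all $y_l$ are assigned, or the residual budget cannot afford any remaining candidate, in which case the unfixed $y_l$'s are forced to $0$), or unfixed. Since $|\mathcal{C}|$ is finite and each branching strictly enlarges the partial assignment $C_2^0(k)\cup C_2^1(k)$, the tree has depth at most $|\mathcal{C}|$ and thus at most $2^{|\mathcal{C}|}$ leaves; no node can be visited twice because branching induces a partition of the binary space.

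Next I would argue validity of the bounds. At a fixed node, MTAP$(\mathbf{y}^k)$ returns a follower-optimal response to a bi-level-feasible leader decision, so plugging it into the leader objective yields a genuine feasible value of problem~\eqref{eq:station_placement_bi-level} and hence a valid global upper bound; taking the running minimum over all fixed nodes gives a valid $UB$. At an unfixed node, the HPR~\eqref{eq:station_placement_bi-level5} augmented with all previously generated value function cuts of the form~\eqref{vfcut5} and link-wise OA cuts~\eqref{eq:lower_level_optimality_constraint5} is, by Propositions~\ref{prop:HP_lower_bound}–\ref{proposition2} and their corollary, a relaxation of the bi-level problem restricted to the current $y$-subtree, so its optimum $LB_k$ is a valid lower bound for every bi-level-feasible completion of $\mathbf{y}^k$. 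Consequently pruning via $LB_k\ge UB$ is safe, and $LB=\min\{LB_k:k\text{ active}\}$ is a valid global lower bound throughout. When the tree is exhausted, every leaf has either produced a feasible solution with objective at least $UB$ or been fathomed by an admissible bound, so the incumbent is globally optimal; when the loop exits early at $UB-LB\le\epsilon$, the gap certifies $\epsilon$-optimality.

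The main obstacle is that the subroutines invoked at each node are themselves iterative: the HPR is solved by column generation, and its value function constraint is enforced through an outer approximation scheme that generates one linear cut per visited follower solution. I would address this by appealing to finiteness of the underlying path set $\Pi(\mathbf{y})$ (column generation cannot produce the same column twice and the pool of battery-feasible paths for a fixed $\mathbf{y}$ is finite), and by invoking the convergence result of \citet{Duran1986}: for a convex follower objective, outer approximation generates only finitely many distinct supporting hyperplanes before no violated cut exists, at which point the linearized master coincides with the convex program on the discrete $\mathbf{y}$-restriction. Thus each node is processed in finitely many inner iterations, and combined with finiteness of the enumeration tree, the overall BPC algorithm terminates after finitely many iterations with a globally optimal (or $\epsilon$-optimal) solution.
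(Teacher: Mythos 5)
Your proof is correct and follows essentially the same route as the paper's: a finite enumeration tree over the binary vector $\mathbf{y}$, validity of the MTAP-based upper bounds and HPR-based lower bounds (via Propositions~\ref{prop:HP_lower_bound}--\ref{proposition2}), safe fathoming, and the standard branch-and-bound termination argument yielding optimality or $\epsilon$-optimality. You go one step further than the paper by explicitly justifying finiteness of the inner column-generation and outer-approximation loops (finite path set, finitely many binary $\mathbf{y}$ generating value-function cuts), a point the paper's proof leaves implicit.
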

\begin{proof}
If \texttt{check}$(k)$ classifies a node $k$ as \emph{fixed}, all binary variables $y_l$ are determined. We then solve the follower equilibrium
\(\text{MTAP}(\mathbf y^{k})\) and obtain the feasible objective value
\(
  U\!B_{k}.
\)
Keeping the best such value over all nodes already evaluated gives
\[
  U\!B=\min\bigl\{U\!B_{k}\;:\;\text{$k$ is an active node of B\&B tree}\bigr\},
\]
which is always a valid upper bound on the optimum of DNDP.
For an \emph{unfixed} node, the algorithm selects a free index
\(l^\star\) whose variable \(y_{l^\star}\) is fractional and whose
station demand \(v_{l^\star}\) is largest.
Two child nodes are created with the additional constraints
\(y_{l^\star}=1\) and \(y_{l^\star}=0\), respectively.
So no feasible binary vector is lost.
A node is discarded in exactly three cases:
\begin{enumerate}[label=(\alph*)]
  \item Infeasible.
  \item Bounded.
        If \(L\!B_{k}\ge U\!B\).
  \item Fixed. 
        Solving MTAP yields the exact value \(U\!B_{k}\); if
        \(U\!B_{k}\ge U\!B\) the subtree cannot contain a better solution.
\end{enumerate}
In each case, fathoming is safe; the global optimum cannot lie in a discarded subtree.
There are at most \(2^{n}\) complete assignments of the binary vector \(\mathbf y\), so the BPC tree is finite.
Each node is processed at most once. Therefore, the algorithm stops after a finite number of iterations.
When the algorithm terminates we have either an empty candidate list (all nodes fathomed), or \(U\!B-L\!B\le\epsilon\).
In first case, we have \(U\!B=L\!B\) = optimal objective value.  
Otherwise\[
  L\!B\;\le\;\text{optimal objective value}\;\le\;U\!B\;\le\;L\!B+\epsilon,
\]
Hence, the BPC algorithm of Algorithm~\ref{BPCAlgorithm} terminates in finite time and produces an optimal solution to the DNDP.
\end{proof}

\subsection{Column Generation}
Formulation \eqref{eq:station_placement_bi-level5} contains one path flow variable $h_{\pi}$ per feasible EV path $\pi\in\Pi (\mathbf{y})$. In realistic networks, $|\Pi (\mathbf{y})|$ is a large set. Following the approach used by \citep{ReyLevin2024}, we apply the CG method to efficiently solve the linear programming (LP) relaxation. We consider a restricted set of feasible paths \( \bar{\Pi} \subseteq \Pi (\mathbf{y}) \), and denote the corresponding restricted path sets as \( \bar{\Pi}_{rs} \subseteq \Pi_{rs} (\mathbf{y}) \) for all OD pairs \( (r,s) \in \mathcal{Z}^2 \). Problem \eqref{eq:station_placement_bi-level5} can be written as follows:
\begin{subequations} \label{eq:rmp_ev_cg}
\begin{alignat}{2}
\min_{\mathbf{y}, \mathbf{x}, \mathbf{v}, \mathbf{h}, \pmb{\eta}, \pmb{\psi}} \quad
& -\sum_{l \in \mathcal{C}} p \cdot v_l + w \cdot \sum_{(r,s)} \left(D_{rs} - \sum_{\pi \in \bar{\Pi}_{rs}} h_{\pi}\right) && \label{eq:rmp_objective} \\[6pt]
\text{s.t.} \quad
& \sum_{l \in \mathcal{C}} C_l \cdot y_l \leq B, && \label{eq:rmp_budget} \\[6pt]
& 0 \leq y_l \leq 1, 
&& \quad \forall l \in \mathcal{C}
\label{eq:station_binary_constraint22} \\[6pt]
& v_l \leq M \cdot y_l, &&\quad \forall l \in \mathcal{C} \label{eq:rmp_bigM} \\[6pt]
& x_{ij} = \sum_{\pi \in \bar{\Pi}} \delta_{ij}^{\pi} h_{\pi}, &&\quad \forall (i,j) \in \mathcal{A} \label{eq:rmp_linkflow} \\[6pt]
& \sum_{\pi \in \bar{\Pi}_{rs}} h_{\pi} = d_{rs}(\mathbf{y}), &&\quad \forall (r,s) \in \mathcal{Z}^2 \label{eq:rmp_demandsatisfy} \\[6pt]
& v_l = \sum_{\pi \in \bar{\Pi}} \phi_l^{\pi} h_{\pi}, &&\quad \forall l \in \mathcal{C} \label{eq:rmp_chargingflow} \\[6pt]
& \int_{0}^{x_{ij}^1} t_{ij}(x)dx + t_{ij}(x^1)(x_{ij}-x_{ij}^1) \leq \eta_{ij}, && \quad \forall \mathbf{x}^1 \label{eq:rmp_link_oa} \\[6pt]
& \int_{0}^{v_l^1} t_l(v)dv + t_l(v^1)(v_l - v_l^1) \leq \psi_l, && \quad \forall \mathbf{v}^1 \label{eq:rmp_station_oa} \\[6pt]
& \sum_{(i,j)\in\mathcal{A}}\eta_{ij}+\sum_{l\in\mathcal{C}}\psi_l + \sum_{\pi \in \bar{\Pi}} c^{\pi} h_{\pi} \notag \\
& \leq \sum_{(i,j)\in\mathcal{A}}\int_{0}^{x_{ij}^\mathrm{f}}t_{ij}(x)dx
+\sum_{l\in\mathcal{C}}\int_{0}^{v_l^\mathrm{f}} t_l(v)dv + \sum_{\pi \in \bar{\Pi}} c^{\pi} h_{\pi}^\mathrm{f} +\sum_{l\in\mathcal{C}}M_l(1-y_l), &&\quad \forall (\mathbf{x}^{\mathrm{f}},\mathbf{v}^{\mathrm{f}}, \mathbf{h}^{\mathrm{f}}) \in \mathcal{F}\label{eq:rmp_vfcut} \\[6pt]
& h_{\pi}\geq 0, &&\quad \forall  \pi\in \bar{\Pi} \label{eq:rmp_bounds}
\end{alignat}
\end{subequations}
Since the objective function aims to maximize path flows \( h_{\pi} \) and charging flows \( v_l \), we reformulate the corresponding constraints as inequalities to expose the sign of their associated dual variables. Specifically, the demand satisfaction constraint \eqref{eq:rmp_demandsatisfy} becomes \( \sum_{\pi \in \bar{\Pi}_{rs}} h_{\pi} \leq d_{rs}(\mathbf{y}) \), the link flow definition \eqref{eq:rmp_linkflow} becomes \( \sum_{\pi \in \bar{\Pi}} \delta_{ij}^{\pi} h_{\pi} \leq x_{ij} \), and the charging flow definition \eqref{eq:rmp_chargingflow} becomes \( \sum_{\pi \in \bar{\Pi}} \phi_l^{\pi} h_{\pi} \geq v_l \). We derive dual variables associated with the constraints to determine the reduced cost of adding new paths to the LP. 
Denote by \( \sigma_{rs} \geq 0 \) the dual variables for the demand satisfaction constraints~\eqref{eq:rmp_demandsatisfy}, and by \( \zeta_{ij} \geq 0 \) the dual variables associated with link flow constraints~\eqref{eq:rmp_linkflow}. Given an OD pair \( (r,s) \in \mathcal{Z}^2 \) and a path \( \pi \in \Pi_{rs}^{\mathrm{f}} \), the reduced cost \( \bar{c}_{\pi} \) for the path flow variable \( h_{\pi} \) is computed as follows:
\begin{equation} \label{eq:reduced_cost}
\bar{c}_{\pi} = -\sigma_{rs} + \sum_{(i,j)\in\mathcal{A}} \delta_{ij}^{\pi}\zeta_{ij} - \sum_{l\in\mathcal{C}}\phi_l^{\pi}\gamma_l + c^{\pi},
\end{equation}
where \( \gamma_l \leq 0 \) represents the dual variables for the charging station flow constraints~\eqref{eq:rmp_chargingflow}, and \( c^{\pi} = \sum_{l \in \mathcal{C}} (c + \tau) \cdot \gamma_l^{\pi} \) is the path-dependent charging cost.
Thus, to identify new paths that have negative reduced cost, one solves, for each OD pair \( (r,s) \), a battery constrained shortest path problem with link costs given by dual variables \( \zeta_{ij} \), station flow costs \( \gamma_l \), and charging costs \( (c + \tau) \cdot \gamma_l^{\pi} \). Since we have the expanded network we can solve the constrained shortest path problem (CSPP) efficiently. If paths with negative reduced costs are found, these are introduced into the restricted path set \( \bar{\Pi} \) to refine the problem iteratively. This process continues until no paths with negative reduced cost are found. Algorithm \ref{alg:column_generation} shows the CG steps to iteratively solve problem \eqref{eq:rmp_ev_cg} using a limited set of paths and updates it by identifying and adding new paths with negative reduced cost. \\

\begin{algorithm}[H]
\DontPrintSemicolon
\caption{CG Algorithm}
\label{alg:column_generation}
Initialize \( \bar{\Pi} \gets \emptyset \), \( \text{RC} \gets -\infty \)\;
\While{\( \text{RC} < 0 \)}{
  Extract duals \( \sigma_{rs}, \zeta_{ij}, \gamma_l \) over \( \bar{\Pi} \)\;
  \For{\( (r,s) \in \mathcal{Z}^2 \)}{
    Build a BCSPP with arc‐costs \(\zeta_{ij}\), station‐flow‐costs \(-\gamma_l\), and charging‐costs \(c \cdot \gamma_l^{\pi}\)\;
    Solve BCSPP to get path \(\pi^\star_{rs}\) minimizing 
     \(\sum\zeta_{ij} - \sum\gamma_l + \sum c \cdot \gamma_l^{\pi}\).\;
     Compute its reduced cost:
     \[
       \bar{c}^\star \;=\; -\sigma_{rs}
         \;+\;\sum_{(i,j)\in\pi^\star_{rs}}\zeta_{ij}
         \;-\;\sum_{l\in\pi^\star_{rs}}\gamma_l
         \;+\; c^{\pi^\star_{rs}}
     \]
     \(\text{RC}_{\min}\gets\min(\text{RC}_{\min},\bar{c}^\star)\).\;
    \If{\( \bar{c}_{\pi^\star_{rs}} < 0 \)}{
      Add \( \pi^\star_{rs} \) to \( \bar{\Pi} \)\;
    }
  }
}
\end{algorithm}

\subsection{Battery State Network Expansion for EV Routing}
In our bi-level formulation, the LL problem captures user behavior in response to the network design decisions made by the UL planner. Specifically, users decide how to route their trips across the network to minimize their total travel and charging costs. This routing behavior directly affects traffic flow patterns and determines the demand at each charging station. Therefore, it is essential to model route choices accurately to reflect how users respond to infrastructure deployment. For EVs, route choice becomes of high importance due to battery limitations. Because of battery limitations, conventional shortest path assumptions are insufficient. The main challenge in EV routing lies in the Battery Constrained Shortest Path Problem (BCSPP), where the shortest path from an origin to a destination must not only minimize travel time or distance, but also respect the vehicle’s battery limitations. In this problem, an EV begins its trip with a fully charged battery and must ensure that at every point along its route, the remaining battery is sufficient to reach the next node, or a charging station must be available en route to recharge. 

For a conventional vehicle, the shortest path between origin $r$ and destination $s$ is determined by minimizing travel time or distance without any operational constraints, since refueling is fast and gas stations are available and easy to find. However, for EVs, we must consider the feasible path set based on the charging stations. $\Pi_{(r,s)} (\mathbf{y})$ consists of all paths $\pi$ from $r$ to $s$ which are feasible for EV based on the charging configuration $\mathbf{y}$, which means the EV starts with full battery level $b$ at the origin $r$ and at every intermediate node $i$ along path $\pi$, the remaining battery level $b_i \geq 0$. The EV can either reach the next node directly or access a charging station to replenish its battery.
Therefore, a path $\pi = (i_1, i_2, ..., i_n)$ is feasible under $\mathbf{y}$ if and only if:
\begin{equation}
b_i - d(i, i+1) \geq 0 \text{ or }\text{ the user will charge at } i \quad \forall i \in {1, 2, ..., n-1}
\end{equation}
where $d(i, i+1)$ is the battery consumption to traverse from node $i$ to $i+1$. In the original network $\mathcal{G}$, finding the shortest path between the origin and the destination is constrained by battery level. If the shortest path length exceeds the available battery range, the EV cannot complete the trip and must charge en route. It is possible that the unconstrained shortest path does not belong to the feasible set. In addition to that, the feasible shortest path for EVs may have a longer travel time or distance compared to the unconstrained shortest path, or even no feasible path exists without charging.

Therefore, we cannot use the original network to find a shortest path that is also feasible for the EV. We must track battery levels alongside physical locations. We extend the traditional transportation network by using battery state as an additional dimension in the node representation.
The primary motivation behind this expansion is to model battery depletion along travel paths and to enable the inclusion of charging behavior at designated nodes. In the unexpanded graph, a node is a simple spatial location. In the expanded graph we consider a node to be a tuple $(i, b)$, where $i$ is the original node and $b$ is the remaining battery level upon arrival. This change allows us to enforce constraints on battery feasibility. A path is only considered feasible if, at every step, the EV has sufficient battery to reach the next node or can recharge at an available charging station along the route. An issue that comes up in the unexpanded graph is that the shortest path in terms of travel time may not be feasible for EVs due to insufficient battery capacity. As a result, EV users may instead use a longer path that passes through one or more charging stations, or that is otherwise feasible given their battery constraints. 

This conversion changes the battery constraint problem to an ordinary arc feasibility problem. A road arc only exists if the remaining battery is enough to traverse it. A zero length charging arc jumps from $(i, b)$ to $(i, b_{max})$ when a station is present. After this transformation, the EV routing problem is just a classical shortest path search on a graph with non‑negative costs, so we can run one standard labeling shortest path problem per origin to obtain the least cost feasible path to every destination. This is dramatically faster than solving a battery constrained shortest path MILP for each O‑D pair and lets us embed the routing step inside the LL UE in the BPC algorithm.
 
To construct the expanded network, we initialize the process by assigning all origin and charging nodes a full battery level. We then recursively explore all outgoing links from each node and battery pair. For each feasible traverse, a new node with the reduced battery level is created if it does not already exist, and a corresponding link is added to show the traversal. This continues until all reachable node and battery combinations are explored. In addition, charging actions are modeled through special charging links, which connect a node with a partial battery to the same physical node with a full battery level. These links are assigned zero physical length and a fixed base travel time corresponding to the charging duration per hour. This approach enables EV users to make charging decisions within the path optimization framework. Figure \ref{fig:combined} shows an original network and the expansion by assuming that the origin is node 1 and we have a charging station on node 3.
\begin{figure}[ht]
    \centering
    \begin{subfigure}[b]{0.3\linewidth}
        \centering
        \includegraphics[width=\linewidth]{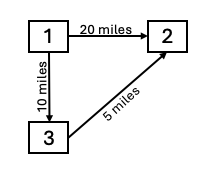}
        \caption{Original network}
        \label{fig:subfig-a}
    \end{subfigure}
    \hfill
    \begin{subfigure}[b]{0.3\linewidth}
        \centering
        \includegraphics[width=\linewidth]{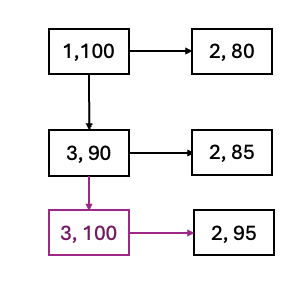}
        \caption{Battery-expanded network}
        \label{fig:subfig-b}
    \end{subfigure}
    \caption{Battery-State Network Expansion}
    \label{fig:combined}
\end{figure}

\section{Numerical Results}
We conducted numerical experiments on three different networks. The Eastern Massachusetts network, which comprises 74 zones, 74 nodes, and 248 links, and the Anaheim network consists of 38 zones, 416 nodes, and 914 links. In addition, various number of new charging station candidate locations were introduced. To accurately model EV routing behavior, we expanded the original network of Eastern Massachusetts by using battery state considerations. The expansion network resulted in an expanded network of 1,624 nodes and 2,612 links. The Anaheim network resulted in 4,015 nodes and 6,191 links in expanded form. The next larger network we have tried our algorithm on is the Barcelona transportation network. This network has 110 zones, 1,020 nodes, and 2,512 links. The network was expanded to 7,559 nodes and 9,169 links. This expanded representation allowed route choices that respect battery constraints and charging requirements. All computational experiments were performed on a MacBook Pro equipped with an Apple M4 Pro chip and 64 GB of unified memory, running macOS.

We derive the $t_{ij} (x_{ij})$, which is the link travel time in our model, from the BPR function, which takes the form:
\begin{equation}
    t_{ij} (x_{ij}) = t_{ij}^{0} \cdot \left( 1 + \alpha \left(\frac{x_{ij}}{C_{ij}}\right)^\beta\right)
\end{equation}
$\alpha$ and $\beta$ are shape parameters that can be calibrated to data. It is common to use $\alpha = 0.15$ and $\beta = 4$. $t_{ij}^{0}$ is the free flow travel time on link $(i, j)$, and $C_{ij}$ is the capacity of link $(i, j)$. This equation is used to derive the travel time of a link $(i, j)$. Similarly, the charging station congestion function $t_l(v_l)$ follows a BPR-style formulation:
\begin{equation}
    t_l(v_l) = t_l^{0} \cdot \left( \alpha \left(\frac{v_l}{K_l}\right)^\beta\right)
\end{equation}
where $K_l$ is the service capacity of station $l$ representing the maximum flow rate the station can handle efficiently. We use the same shape parameters $\alpha = 0.15$ and $\beta = 4$ for consistency with link travel time. The station capacity is determined by the investment cost through the relation $K_l = \kappa \cdot C_l$, where $\kappa$ is a capacity cost proportionality constant and $C_l$ is the installation cost at station $l$. Note that this equation captures the delay at the station due to congestion and queue, not the charging time. Charging time was captured on the third component of the objective function, which varied based on the arrival battery state. This formulation reflects the realistic assumption that stations with higher investment have greater capacity and can serve more vehicles with less congestion. The charging price is set to $c$ per unit time (e.g., dollars per hour), so the path-dependent charging cost is $c^{\pi} = \sum_{l \in \mathcal{C}} c \cdot \gamma_l^{\pi}$, where $\gamma_l^{\pi}$ is the charging time for a vehicle on path $\pi$ at station $l$ based on its battery state upon arrival. This monetary cost captures both the direct electricity cost and the opportunity cost of time spent charging, ensuring that users with different arrival battery states incur different costs proportional to their required charging duration.

The charging time $\gamma_l^\pi$ is not an exogenous input but is endogenously determined through our battery-expanded network representation. In the expanded network, each node is represented as a tuple $(i, b)$, where $i$ is the physical location and $b$ is the remaining battery level. As vehicles traverse the network, the battery state is tracked at each node along the path. When a vehicle following path $\pi$ arrives at charging station $l$, the expanded network directly provides the arrival battery level $b_l^\pi$. The charging time is then calculated as:
\begin{equation}
    \gamma_l^\pi = (B^{\max} - b_l^\pi) \cdot \rho
\end{equation}
where $B^{\max}$ is the full battery capacity, $b_l^\pi$ is the arrival battery level at station $l$ for a vehicle following path $\pi$, and $\rho$ is the charging rate (time required to charge one unit of battery, e.g., hours per percentage point). Consequently, the charging cost for a vehicle on path $\pi$ at station $l$ becomes:
\begin{equation}
    c^{\pi} \cdot \gamma_l^\pi = (c + \tau) \cdot (B^{\max} - b_l^\pi) \cdot \rho
\end{equation}
This formulation ensures that vehicles arriving with depleted batteries spend more time charging and more cost and thus incur higher perceived charging costs than those arriving with higher charge levels, capturing the path dependent nature of charging costs in our model. The total charging cost $c^\pi$ that depends on path aggregates these costs across all charging stations visited along path $\pi$.

In the following numerical experiments, we assume that EVs depart from their origins with a full battery, i.e., $b = B^{\max}$. To ensure that the charging station placement problem remains practically relevant, we scale the link lengths in the network to represent long distance travel scenarios where en-route charging becomes necessary. This scaling reflects the intended application of our model for strategic infrastructure planning on highway networks or inter city corridors, where trip distances typically exceed the driving range of EVs.
It is important to note that the full battery assumption is not a fundamental limitation of our methodology. The battery-expanded network representation is flexible and can accommodate any initial battery state. Specifically, the model can be extended to handle: (i) a fixed partial initial battery for all users (e.g., $b = 0.7 \cdot B^{\max}$), representing conservative planning scenarios; (ii) origin destination specific initial battery states $b_{rs}^{\text{init}}$, which can show the heterogeneous trip characteristics; or (iii) stochastic initial battery states drawn from a distribution to reflect real world variability in departure conditions. These extensions require only modifying the initialization of the battery state in the expanded network construction, while the core algorithmic framework, including the BPC algorithm, value function cuts, and column generation remains entirely unchanged. This flexibility ensures that our approach is applicable to a wide range of practical scenarios beyond the specific assumptions used in our numerical experiments.

The model parameters were set as follows. The base station cost was set to $C = 200$ monetary units, with individual station costs randomly drawn from the range $[0.1C, 1.0C]$ using a uniform distribution which reflects heterogeneity in installation costs across candidate locations. The budget was calculated as a proportion of the total cost of all candidate stations, i.e., $B = B_{\text{prop}} \times \sum_{l \in \mathcal{L}} c_l$, where $B_{\text{prop}}$ represents the budget proportion varied across experiments. Station capacity was determined proportionally to investment cost using the relationship $K_l = \kappa \cdot c_l$, where the capacity-cost ratio $\kappa = 0.1$ ensures that stations with higher installation costs have correspondingly higher service rates. The charging price was set to $p = 10$ per unit of flow, representing the revenue collected from users charging their vehicles. The value of time was set to $\tau = 1.0$ \$/minute (equivalent to \$60/hour), a commonly used value in transportation economics. The charging time per unit of battery was set to $\gamma = 1.0$ minute, meaning that recharging from empty to full (100 units) requires 100 minutes at the base rate. These parameter values were chosen to reflect realistic operating conditions while maintaining computational tractability.

Figures \ref{fig1} illustrate the convergence behavior and optimality gap percentage of our BPC algorithm on the Anaheim network, respectively. These graphs are for the case of 30 candidate locations for charging stations. The convergence graph demonstrates the progression of the LB and UB. The gap between the two bounds decreases steadily until convergence is achieved. Specifically, the optimal solution was identified after evaluating 33 nodes in the B\&B tree, achieving a 0.00\% optimality gap, which shows the effectiveness of using value function cuts. Figure \ref{fig2} shows the number of paths in each B\&B node, indicating the success of CG in identifying and adding useful paths.
\begin{figure}
    \centering
    \includegraphics[width=0.99\linewidth]{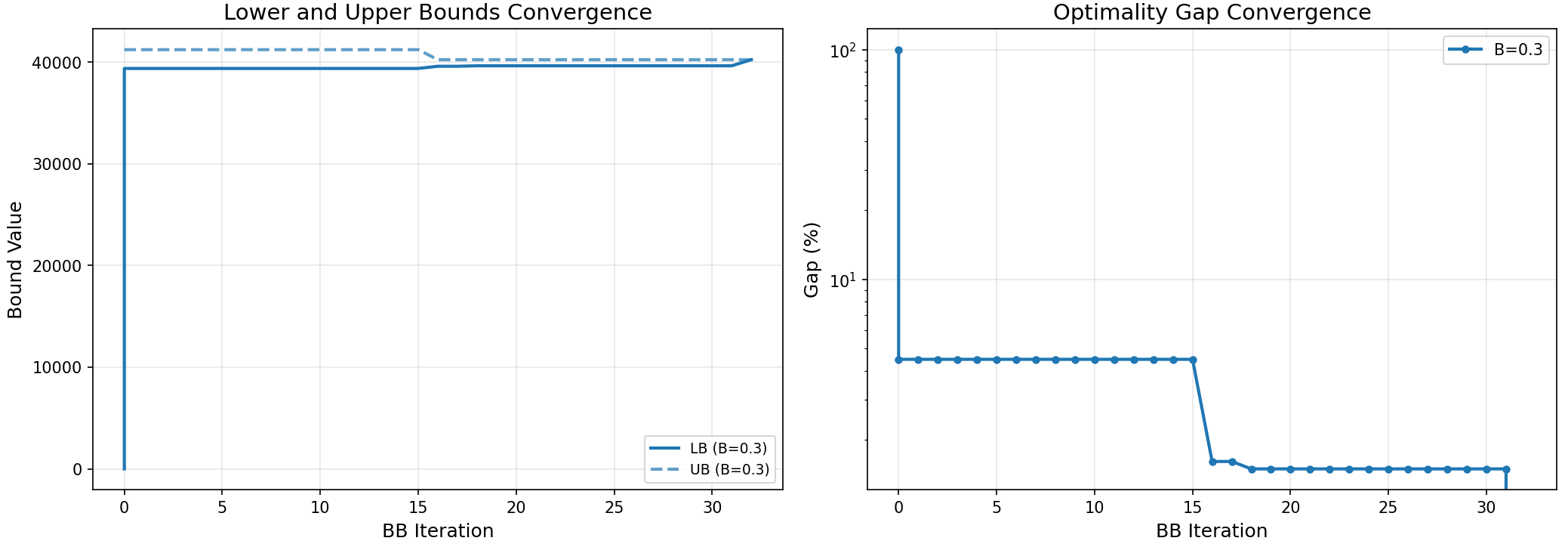}
    \caption{Convergence over B\&B nodes}
    \label{fig1}
\end{figure}

\begin{figure}
    \centering
    \includegraphics[width=0.55\linewidth]{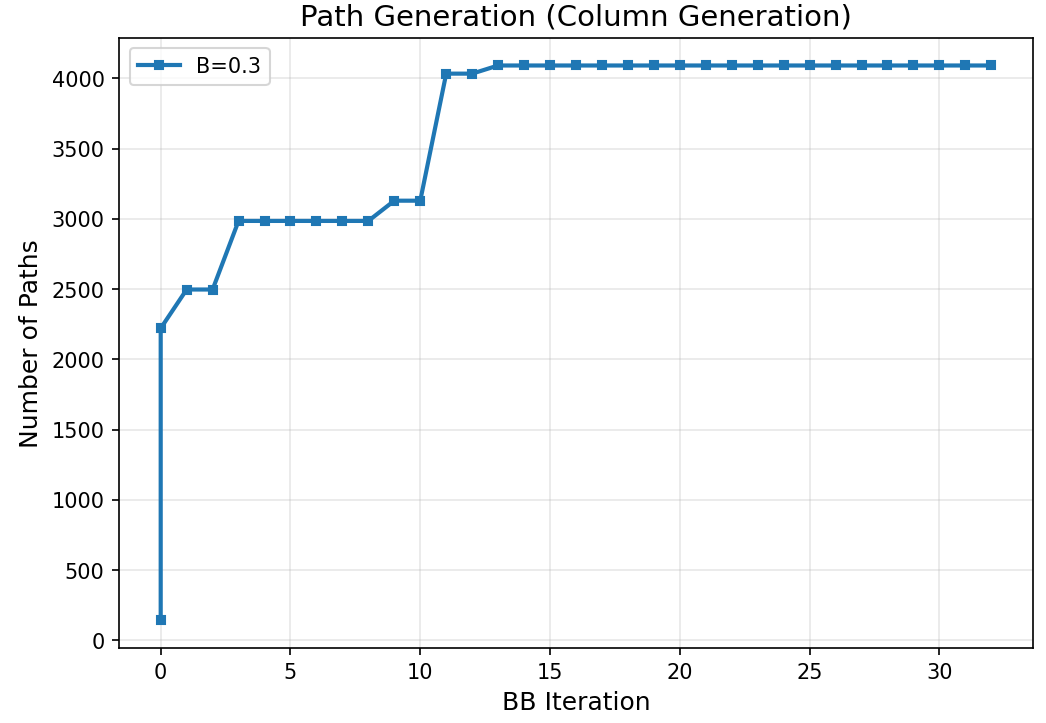}
    \caption{Number of Paths over the B\&B nodes}
    \label{fig2}
\end{figure}

Table \ref{tab:results_all_networks} presents a detailed breakdown of computational performance across different problem sizes on all three networks, where ``Candidates'' shows the number of candidate locations. The ``TAP (s)'' column shows the time required to solve the MTAP, which determines traffic flow patterns given the current network configuration. The ``Pricing (s)'' represents the time needed to identify new paths to add to the HPR. ``RMP (s)'' shows the time spent solving HPR using CPLEX Solver \cite{cplex2009v12}. ``Total (s)'' indicates the total algorithm termination time, ``VF cuts'' represents the total number of value function cuts generated throughout the solution process, ``BB\_Nodes'' indicates the B\&B number of nodes until termination, and ``Comb.'' displays the total number of feasible combinations under the budget constraint. ``Gap (\%)'' shows the optimality gap at termination, ``Objective\_UB'' shows the upper bound objective function value, ``Revenue'' shows the charging revenue collected, and ``Unmet\_Demand'' shows the penalty for unserved demand.

\begin{sidewaystable}[htbp]
    \centering
    \scriptsize
    \setlength{\tabcolsep}{3pt}
    \caption{Summary of computational results across all networks}
    \label{tab:results_all_networks}
    \resizebox{\textheight}{!}{%
    \begin{tabular}{@{}ccc|cccccc|ccccc@{}}
        \hline
        Candidates & Comb. & BB\_Nodes & Gap (\%) & Objective\_UB & Revenue & Unmet\_Demand & Num\_Selected & Num\_Paths & Total time(s) & TAP (s) & RMP (s) & Pricing(s) & VF\_cuts \\
        \hline
        \multicolumn{14}{c}{\textbf{EasternMassachusetts ($B_{\text{Prop}} = 0.4$)}} \\
        \hline
        5 & 24 & 7 & 0.00 & 17142.60 & 59.16 & 172.62 & 2/5 & 568 & 18.76 & 2.84 & 1.49 & 0.54 & 2409 \\
        10 & 347 & 9 & 0.00 & 15753.97 & 239.00 & 162.25 & 4/10 & 2012 & 61.30 & 41.43 & 4.04 & 1.19 &  2478\\
        20 & 10480 & 2 & 0.60 & 15300.04 & 278.20 & 155.78 & 7/20 & 3835 & 80.00 & 59.92 & 4.12 & 0.63 &  2617\\
        25 & 54420 & 29 & 0.00 & 13954.52 & 428.28 & 145.77 & 10/25 & 6068 & 101.56 & 60.04 & 18.99 & 6.80 &  2630\\
        30 & 304786 & 47 & 0.00 & 13411.06 & 417.12 & 140.45 & 13/30 & 10904 & 235.99 & 150.16 & 48.08 & 20.31 &  2848\\
        \hline
        \multicolumn{14}{c}{\textbf{Anaheim ($B_{\text{Prop}} = 0.3$)}} \\
        \hline
        20 & 10480 & 19 & 0.47 & 40225.16 & 79.26 & 409.24 & 6/20 & 1968 & 281.54 & 47.34 & 174.98 & 5.23 & 5599 \\
        25 & 54420 & 17 & 0.75 & 40031.32 & 23.30 & 411.95 & 7/25 & 3672 & 427.32 & 71.24 & 258.16 & 8.61 & 6983 \\
        30 & 304786 & 33 & 0.00 & 40225.16 & 95.67 & 488.66 & 8/30 & 4091 & 779.44 & 124.97 & 535.82 & 16.70 & 7076 \\
        40 & > $10^6$ & 93 & 0.00 & 39885.12 & 181.12 & 406.82 & 12/40 & 16174 & 3169.09 & 276.80 & 2521.39 & 227.03 & 7908 \\
        \hline
        \multicolumn{14}{c}{\textbf{Barcelona ($B_{\text{Prop}} = 0.3$)}} \\
        \hline
        10 & 10480 & 9 & 0.60 & 36245.05 & 27.88 & 360.56 & 3/10 & 141 & 856.27 & 189.16 & 383.52 & 4.84 & 9303 \\
        20 & 54420 & 28 & 0.15 & 35990.40 & 52.95 & 359.90 & 7/20 & 199 & 1674.98 & 236.21 & 1106.95 & 12.18 & 9435 \\
        30 & 304786 & 25 & 0.77 & 35985.51 & 173.52 & 358.82 & 9/30 & 785 & 2325.35 & 294.87 & 1641.82 & 17.06 & 9723 \\
        40 & > $10^6$ & 101 & 0.91 & 35506.57 & 299.43 & 354.82 & 13/40 & 6000 & 10530.83 & 395.28 & 9597.46 & 112.89 & 9928 \\
        \hline
    \end{tabular}
}
\end{sidewaystable}
For the Eastern Massachusetts network, the algorithm achieves 0.00\% optimality gaps for most instances, with only the 20-candidate instance terminating with a small gap of 0.60\%. The largest instance (30 candidates, over 300,000 feasible combinations) is solved to optimality in 235.99 seconds. The results reveal that the TAP component constitutes the primary computational bottleneck, growing from 2.84 seconds for 5 candidates to 150.16 seconds for 30 candidates. The RMP and Pricing (CG) components also show considerable growth with problem size. The algorithm successfully identifies optimal solutions with up to 13 out of 30 stations selected.

For the Anaheim network, the algorithm solves all instances, with the two smaller instances (20 and 25 candidates) terminating with gaps of 0.47\% and 0.75\%, respectively, while the larger instances (30 and 40 candidates) achieve 0.00\% gaps. The largest Anaheim instance with 40 candidates required 3169.09 seconds and explored 93 B\&B nodes, selecting 12 out of 40 candidate stations. The most time-consuming component for this network is the RMP, consuming 2521.39 seconds out of the total runtime for the 40-candidate instance. This means that the number of paths added to the network is high, increasing the complexity of the master problem.

For the Barcelona network, which is the largest network tested, the algorithm maintains optimality gaps below 1\% across all instances. The most challenging instance with 40 candidates required 10530.83 seconds (approximately 2.9 hours) and explored 101 B\&B nodes, achieving a 0.91\% gap while selecting 13 out of 40 stations. The RMP dominates the computation time, accounting for 9597.46 seconds of the total runtime. This indicates that employing faster TAP algorithms and more efficient master problem solvers could considerably improve overall performance.

\subsection{Metaheuristic Benchmarks: Genetic Algorithms}
To benchmark the exact BPC method, we implement two population-based genetic algorithms (GA) with different levels of complexity. We selected GAs as the benchmark heuristic because they are among the most widely used metaheuristics for network design problems in the transportation literature \cite{gaya, tomma, MIANDOABCHI20111041, juan, FARAHANI2013281, POORZAHEDY2007578, pattnaik}. First, a lightweight Basic GA with a small population and generation-based stopping, and a more complete Full GA with additional diversity mechanisms and a time-limit stopping rule. Both algorithms optimize the same upper-level decision vector and evaluate each candidate solution by solving the same lower-level UE model, so performance differences across methods are directly attributable to the search strategy rather than to model approximations.

In both heuristics, a candidate solution is encoded as a binary vector $\mathbf{y}=(y_1,\dots,y_{|\mathcal{C}|})$, where $y_l=1$ indicates that station $l\in\mathcal{C}$ is built and $y_l=0$ otherwise. For any $\mathbf{y}$, the fitness value is obtained by solving $\mathrm{MTAP}(\mathbf{y})$ with Frank--Wolfe and evaluating the resulting upper-level objective
\begin{equation}
f(\mathbf{y})=-\sum_{l\in\mathcal{C}}p \cdot v_l+ w \cdot \sum_{(r,s)}\left(D_{rs}-\sum_{\pi\in\Pi_{rs}(\mathbf{y})}h_\pi\right).
\end{equation}
Because each fitness evaluation requires solving a UE subproblem, if a binary vector has already been evaluated, its stored objective value is reused, avoiding redundant solving UE. Budget feasibility is enforced throughout the search. Any candidate violating
\begin{equation}
\sum_{l\in\mathcal{C}} C_l \cdot y_l \le B
\end{equation}
is repaired by iteratively deactivating the currently selected station with the largest installation cost until the budget constraint is satisfied.

\subsubsection{Basic Genetic Algorithm (GA)}
The Basic GA is designed as a low-overhead population-based baseline that keeps only the core GA components, random feasible initialization, tournament selection, one-point crossover, bit-flip mutation, single elitism, and budget repair. The population is small by design to limit the number of expensive UE evaluations. No adaptive mechanisms (immigrant injection, stagnation-triggered mutation scaling, or greedy seeding) are used. The algorithm terminates when either the maximum number of generations $G_{\max}$ is reached, or no improvement is found for $G_{\mathrm{stop}}$ consecutive generations. Since the stopping criterion is generation-based rather than time-based, the total number of UE evaluations is predictable and bounded by $N_{\text{pop}} \times G_{\max}$.

\begin{algorithm}[H]
\DontPrintSemicolon
\caption{Basic GA}
\label{alg:basic_ga}
Initialize a population of $N_{\text{pop}}$ random budget-feasible chromosomes\;
Evaluate all individuals by solving $\mathrm{MTAP}(\mathbf{y})$ (or cache lookup); record best incumbent $\mathbf{y}^*$\;
Set no-improvement counter to zero\;
\For{$g = 1, 2, \dots, G_{\max}$}{
    Copy elite individual to next generation\;
    \While{new population size $< N_{\text{pop}}$}{
        Select two parents by tournament selection\;
        Apply one-point crossover to produce two offspring\;
        Apply bit-flip mutation to each offspring\;
        Repair offspring if budget-infeasible\;
        Evaluate offspring via $\mathrm{MTAP}(\mathbf{y})$ (or cache lookup)\;
        Add offspring to new population\;
    }
    \eIf{new best found}{
        Update $\mathbf{y}^*$; reset no-improvement counter\;
    }{
        Increment no-improvement counter\;
    }
    \If{no-improvement counter $\ge G_{\mathrm{stop}}$}{
        \textbf{break}\;
    }
}
Return $\mathbf{y}^*$ and $f(\mathbf{y}^*)$\;
\end{algorithm}

\subsubsection{Full Population-Based Genetic Algorithm (Full GA)}
The Full GA uses a richer exploration mechanism through population search. The initial population of size $N_{\text{pop}}$ is constructed using a combination of seeded and random individuals to promote both diversity and early exploitation of promising regions:
\begin{enumerate}
    \item A null solution with no stations built ($\mathbf{y} = \mathbf{0}$), serving as a baseline.
    \item A greedy-cheapest solution, constructed by iteratively adding the least expensive candidate stations until the budget $B$ is exhausted.
    \item A greedy-expensive solution, constructed by prioritizing the most expensive (highest capacity) stations first.
    \item The remaining individuals are generated by random budget-feasible construction: candidate locations are shuffled randomly, and stations are activated sequentially, skipping any that would violate the budget constraint $\sum_{l \in \mathcal{C}} C_l \cdot y_l \leq B$.
\end{enumerate}

Parents are selected via tournament selection with tournament size $k$. For each tournament, $k$ individuals are drawn uniformly at random from the current population, and the individual with the best (lowest) fitness is selected as the parent. Offspring are generated using uniform crossover with probability $p_c$. For each gene position $l$, the alleles of the two parents are swapped with probability 0.5, producing two children. If either child violates the budget constraint, a repair operator is applied. Each gene in the offspring undergoes an independent bit-flip mutation with probability $p_m$. If any gene is flipped, the resulting chromosome is repaired if necessary. So budget-infeasible chromosomes are repaired by iteratively deactivating the most expensive active station (i.e., setting $y_l = 0$ for the $l$ with the largest $C_l$ among active stations) until feasibility is restored.

The top $n_e$ individuals from the current generation are carried directly into the next generation without modification (elitism), ensuring that the best-known solutions are never lost. To prevent premature convergence, the algorithm monitors the number of consecutive generations without improvement (stagnation). When stagnation exceeds a threshold of 10 generations, the mutation rate is adaptively doubled (capped at 0.5) to increase exploration. If stagnation persists beyond 15 generations, a set of random immigrants (10\% of the population) replaces the worst individuals to reinject diversity.
The Full GA also runs under a time limit and returns the best incumbent if the run is interrupted. Relative to the Basic GA, this variant has higher per-generation overhead and typically requires evaluating more candidate solutions; however, it can explore the feasible set more broadly when computational time is sufficient.

\begin{algorithm}[H]
\DontPrintSemicolon
\caption{Full GA (population-based)}
\label{alg:full_ga}
Build initial feasible population with seeded and random chromosomes\;
Evaluate all individuals by solving $\mathrm{MTAP}(\mathbf{y})$ and record best incumbent\;
\For{$g=1,2,\dots,G_{\max}$}{
\If{elapsed time $\ge T_{\max}$}{break and report incumbent\;}
Copy elite individuals to next generation\;
\While{new population size $< N_{\text{pop}}$}{
Select two parents by tournament selection\;
Apply uniform crossover to create offspring\;
Apply bit-flip mutation to offspring\;
Repair offspring if budget-infeasible\;
Evaluate offspring via $\mathrm{MTAP}(\mathbf{y})$ (or cache lookup)\;
Add offspring to new population\;
}
Update incumbent from the new population\;
\If{stagnation persists}{inject random feasible immigrants\;}
}
Return incumbent $\mathbf{y}^*$ and $f(\mathbf{y}^*)$\;
\end{algorithm}

\subsubsection{Parameterization Used in Experiments}
In the Full GA experiments we use $N_{\text{pop}}=40$, $G_{\max}=100$, crossover rate $p_c=0.8$, mutation rate $p_m=0.15$, tournament size $k=3$, elitism count $n_e=2$, and a wall-clock limit of $T_{\max}=3$ hours. In the Basic GA experiments we use $N_{\text{pop}}=12$, $G_{\max}=25$, crossover rate $p_c=0.8$, mutation rate $p_m=0.10$, tournament size $k=2$, elitism count $n_e=1$, and early stopping after $G_{\mathrm{stop}}=8$ generations without improvement. There is no wall-clock limit in the Basic GA; termination is defined entirely by generation count.

\subsubsection{Three-Way Comparison}
The numerical study compares BPC, Basic GA, and Full GA on identical problem instances. For each method, we report objective value, total runtime, and number of UE solves. Since BPC is exact (up to the prescribed optimality tolerance) while both GAs are heuristics, the quality of each heuristic solution is measured relative to the BPC result through
\begin{equation}\label{eq:gapcal}
\mathrm{Gap}_{\text{heur}}=\frac{f_{\text{heur}}-f_{\text{BPC}}^*}{|f_{\text{BPC}}^*|}\times 100\%,
\end{equation}
where $f_{\text{heur}}$ is the objective returned by either the Basic GA or the Full GA.

This design isolates a practically important trade-off. The Basic GA provides a bounded, predictable number of UE evaluations and terminates by generation count; the Full GA explores more broadly through a larger population and diversity mechanisms but requires a time limit to handle large instances; and BPC provides lower and upper bounds with a certificate of optimality. Evaluating all three on the same instances clarifies when a lightweight GA is sufficient, when richer heuristic exploration adds value, and when exact certification materially changes the infrastructure planning decision.
\subsubsection{Benchmark Results}
\label{sec:benchmark_results}

We now compare the three methods on identical problem instances across three networks we tested before. For each instance, all methods solve the same upper-level model, evaluate candidate solutions through the same lower level UE subproblem on the same battery expanded network, and run on the same hardware. Consequently, performance differences are attributable solely to the search strategy.

The candidate station sets and installation costs used here are generated independently of those in Table~\ref{tab:results_all_networks}. Specifically, candidate locations are randomly sampled from the node set, and installation costs are drawn uniformly with a different random seed. As a result, the BPC objective values in Tables~\ref{tab:bench_EM}--\ref{tab:bench_barcelona} may differ from those reported earlier.

For each method we report the best objective value found, the number of stations selected, the total runtime, the number of UE solves, and the runtime ratio of each heuristic relative to BPC. The heuristic gap is computed relative to the BPC upper bound using Equation \eqref{eq:gapcal}. BPC's own optimality gap is reported separately. All instances use $B_{\mathrm{prop}} = 0.4$ for Eastern Massachusetts and $B_{\mathrm{prop}} = 0.3$ for Anaheim and Barcelona.

\begin{table}[htbp]
\centering
\caption{Three-way comparison on the Eastern Massachusetts network ($B_{\mathrm{prop}}=0.4$).}
\label{tab:bench_EM}
\footnotesize
\resizebox{\textwidth}{!}{%
\begin{tabular}{r l r r r r r r r}
\hline
Candidates & Method & Objective\_UB & Num\_Selected & Total (s) & \#UE & Gap (\%) & Heur.\ Gap (\%) & Slowdown \\
\hline
\multirow{3}{*}{5}
 & BPC      & 18643.55 & 1  & 19     & 2   & 0.00 & ---   & --- \\
 & Basic GA & 18643.55 & 2  & 36     & 10  & ---  & 0.00  & 1.9$\times$ \\
 & Full GA  & 18643.55 & 2  & 33     & 10  & ---  & 0.00  & 1.7$\times$ \\
\hline
\multirow{3}{*}{10}
 & BPC      & 17615.67 & 4  & 22     & 2   & 0.00 & ---   & --- \\
 & Basic GA & 17615.67 & 4  & 153    & 54  & ---  & 0.00  & 6.8$\times$ \\
 & Full GA  & 17615.67 & 4  & 926    & 268 & ---  & 0.00  & 41.3$\times$ \\
\hline
\multirow{3}{*}{20}
 & BPC      & 16752.06 & 7  & 302    & 2   & 0.00 & ---   & --- \\
 & Basic GA & 16752.06 & 8  & 13122 & 97  & ---  & 0.00 & 43.4$\times$ \\
 & Full GA  & 16752.06 & 8  & 11042 & 83  & ---  & 0.00 & 36.6$\times$ \\
\hline
\multirow{3}{*}{25}
 & BPC      & 15834.59 & 9  & 287    & 2   & 0.00 & ---   & --- \\
 & Basic GA & 15834.59 & 10 & 12749 & 99  & ---  & 0.00 & 44.5$\times$ \\
 & Full GA  & 15834.59 & 10 & 10975 & 87  & ---  & 0.00 & 38.3$\times$ \\
\hline
\multirow{3}{*}{30}
 & BPC      & 15325.50 & 12 & 340    & 2   & 0.00 & ---   & --- \\
 & Basic GA & 15326.70 & 11 & 16966 & 100 & ---  & 0.01 & 49.9$\times$ \\
 & Full GA  & 15326.70 & 11 & 11092 & 73  & ---  & 0.01 & 32.6$\times$ \\
\hline
\end{tabular}}
\end{table}

\begin{table}[htbp]
\centering
\caption{Three-way comparison on the Anaheim network ($B_{\mathrm{prop}}=0.3$).}
\label{tab:bench_anaheim}
\footnotesize
\resizebox{\textwidth}{!}{%
\begin{tabular}{r l r r r r r r r}
\hline
Candidates & Method & Objective\_UB & Num\_Selected & Total (s) & \#UE & Gap (\%) & Heur.\ Gap (\%) & Slowdown \\
\hline
\multirow{3}{*}{20}
 & BPC      & 39615.09 & 7  & 385      & 22  & 0.67 & ---  & --- \\
 & Basic GA & 39624.14 & 7  & 3860   & 118 & ---  & 0.02 & 10.0$\times$ \\
 & Full GA  & 39709.47 & 7  & 11361  & 311 & ---  & 0.24 & 29.5$\times$ \\
\hline
\multirow{3}{*}{25}
 & BPC      & 39785.68 & 7  & 658      & 8   & 0.13 & ---  & --- \\
 & Basic GA & 39945.45 & 8  & 2651   & 115 & ---  & 0.40 & 4.0$\times$ \\
 & Full GA  & 39852.94 & 9  & 11875  & 129 & ---  & 0.17 & 18.1$\times$ \\
\hline
\multirow{3}{*}{30}
 & BPC      & 39312.25 & 10 & 846      & 7   & 0.00 & ---  & --- \\
 & Basic GA & 39587.24 & 10 & 4693   & 130 & ---  & 0.70 & 5.5$\times$ \\
 & Full GA  & 39402.50 & 11 & 12085  & 88  & ---  & 0.23 & 14.3$\times$ \\
\hline
\multirow{3}{*}{40}
 & BPC      & 38855.51 & 12 & 3522   & 4   & 0.00 & ---  & --- \\
 & Basic GA & 38943.70 & 13 & 11092  & 231 & ---  & 0.23 & 3.1$\times$ \\
 & Full GA  & 39885.12 & 13 & 14339  & 41  & ---  & 2.65 & 4.1$\times$ \\
\hline
\end{tabular}}
\end{table}

\begin{table}[htbp]
\centering
\caption{Three-way comparison on the Barcelona network ($B_{\mathrm{prop}}=0.3$).}
\label{tab:bench_barcelona}
\footnotesize
\resizebox{\textwidth}{!}{%
\begin{tabular}{r l r r r r r r r}
\hline
Candidates & Method & Objective\_UB & Num\_Selected & Total (s) & \#UE & Gap (\%) & Heur.\ Gap (\%) & Slowdown \\
\hline
\multirow{3}{*}{10}
 & BPC      & 36215.01 & 3  & 317      & 2   & 0.52 & ---  & --- \\
 & Basic GA & 36291.80 & 4  & 321      & 26  & ---  & 0.21 & 1.0$\times$ \\
 & Full GA  & 36215.01 & 3  & 1431   & 110 & ---  & 0.00 & 4.5$\times$ \\
\hline
\multirow{3}{*}{20}
 & BPC      & 35937.43 & 7  & 410      & 2   & 0.00 & ---  & --- \\
 & Basic GA & 35958.63 & 7  & 2662   & 125 & ---  & 0.06 & 6.5$\times$ \\
 & Full GA  & 35964.27 & 7  & 10841  & 338 & ---  & 0.07 & 26.4$\times$ \\
\hline
\multirow{3}{*}{30}
 & BPC      & 35930.58 & 9  & 1087   & 4   & 0.62 & ---  & --- \\
 & Basic GA & 35960.49 & 10 & 3532   & 99  & ---  & 0.08 & 3.2$\times$ \\
 & Full GA  & 35960.49 & 10 & 10868  & 306 & ---  & 0.08 & 10.0$\times$ \\
\hline
\multirow{3}{*}{40}
 & BPC      & 35273.82 & 13 & 959      & 2   & 0.26 & ---  & --- \\
 & Basic GA & 35417.78 & 15 & 17101  & 238 & ---  & 0.41 & 17.8$\times$ \\
 & Full GA  & 35698.80 & 15 & 10939  & 166 & ---  & 1.20 & 11.4$\times$ \\
\hline
\end{tabular}}
\end{table}


On Eastern Massachusetts, all three methods recover the same objective on every instance, and the heuristic gaps are at most 0.01\%. On Anaheim and Barcelona, BPC strictly improves upon both GAs in most instances, and the heuristic gap widens with problem size. The largest gaps occur at $|\mathcal{C}|=40$ in which the Full GA terminates 2.65\% above BPC on Anaheim and 1.20\% above on Barcelona, despite exhausting its three-hour running time in both cases. This pattern reflects the combinatorial growth of the feasible set, as $|\mathcal{C}|$ increases, population-based search without a global lower bound becomes increasingly unlikely to locate the optimum.

A fundamental limitation of heuristic methods is the absence of a principled stopping criterion. Because GAs do not maintain a lower bound on the optimal objective, there is no way to determine during the search whether the current incumbent is close to the global optimum or far from it. In practice, this forces the user to either set an arbitrary generation limit (as in Basic GA) or impose a time budget (as in Full GA), neither of which provides any guarantee that the search has converged. To give the heuristics a fair opportunity, we allowed the Full GA a generous three-hour time limit and equipped it with adaptive mutation, immigrant injection, and greedy seeding. Despite this, the Full GA still terminates with gaps of up to 2.65\% on the larger instances. More importantly, the user has no way of knowing whether running the algorithm longer would close this gap or whether the incumbent is already near-optimal. BPC, by contrast, maintains both a lower bound and an upper bound throughout the search. When these bounds meet within the prescribed tolerance, the algorithm terminates with a certificate of global optimality. This self-certifying property means that BPC stops precisely when the answer is proven, which is not only more reliable but also faster. BPC requires between 2 and 22 UE solves across all instances, whereas the GAs require between 10 and 338.

The runtime advantage of BPC is most pronounced on the larger instances. On Eastern Massachusetts with $|\mathcal{C}| \geq 20$, BPC is 33--50$\times$ faster than both GAs while finding the same objective. On Anaheim and Barcelona, BPC is 3--30$\times$ faster in all instances. The efficiency stems from the value function cuts, which progressively tighten the HPR so that lower bounds are established through linear programming rather than repeated UE evaluation.

Figure~\ref{fig:convergence_barcelona} illustrates this contrast on the Barcelona instance with $|\mathcal{C}|=20$ and $B_{\mathrm{prop}}=0.3$. The BPC upper bound (black circles) and lower bound (red crosses) converge to the certified optimum within 409 seconds, at which point the algorithm terminates with a proven gap of 0.00\% (dotted vertical line). The Full GA (blue) and Basic GA (orange) continue searching well beyond this point, decreasing their incumbent objectives in a stepwise fashion over successive generations, but neither reaches the BPC objective. At the time BPC has already terminated, both GAs are still far from their final values and have no mechanism to assess how much further improvement is possible. The Full GA eventually exhausts its three-hour budget with a final objective of 36086.59, which is 0.07\% above BPC's certified optimum. Even this small residual gap is only quantifiable because BPC provides the reference value; without it, the GA user would have no basis for evaluating solution quality.

\begin{figure}[htbp]
    \centering
    \includegraphics[width=\textwidth]{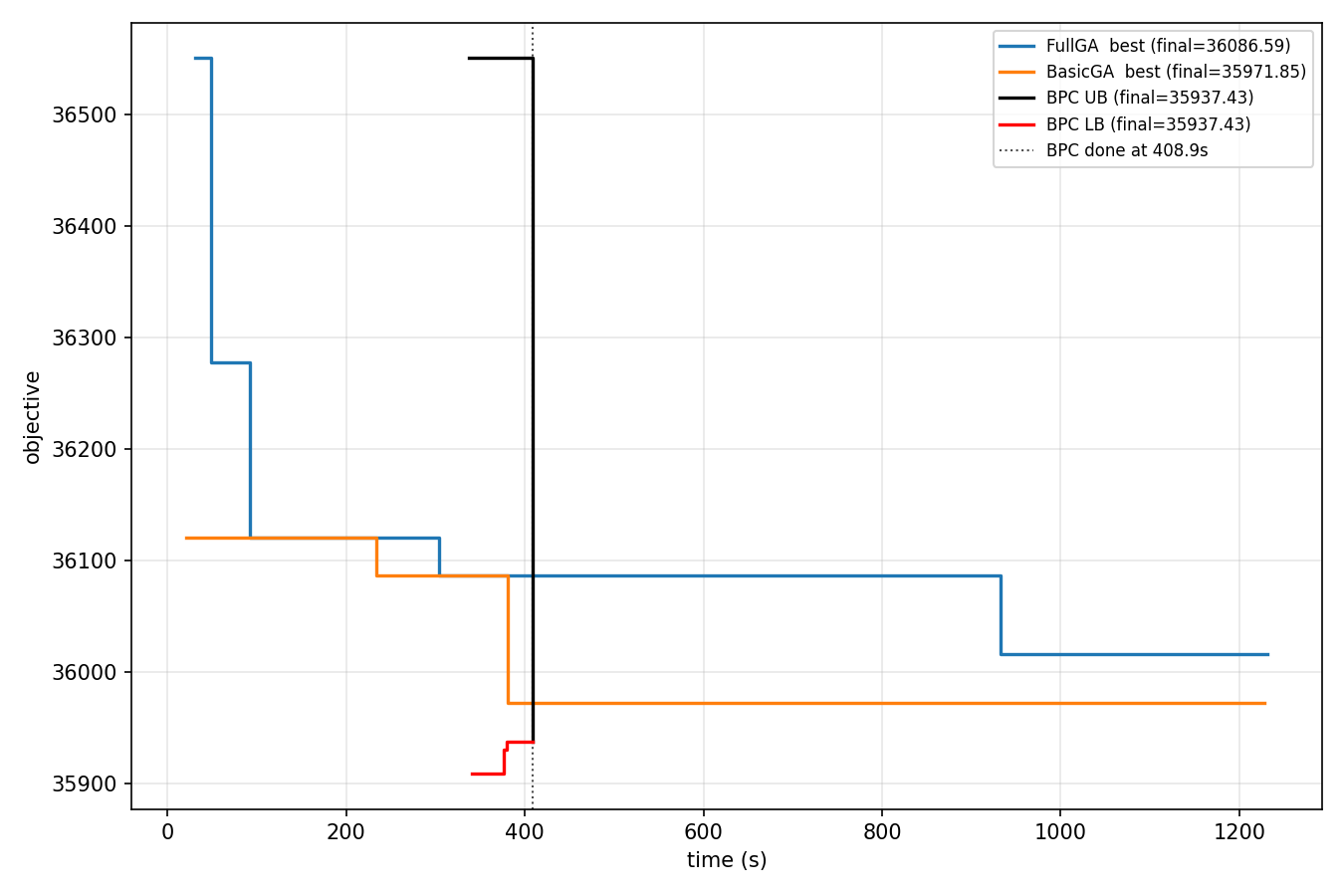}
    \caption{Convergence on Barcelona ($|\mathcal{C}|=20$, $B_{\mathrm{prop}}=0.3$). Solid lines show the best incumbent objective over time for Full GA (blue) and Basic GA (orange); lower is better. The black and red lines show the BPC upper bound and lower bound.}
    \label{fig:convergence_barcelona}
\end{figure}

The GAs sometimes select more stations than BPC for the same or worse objective. On Anaheim with $|\mathcal{C}|=40$, both GAs select 13 stations while BPC selects 12 and achieves a better objective; on Barcelona with $|\mathcal{C}|=40$, both GAs select 15 versus BPC's 13. This over-building arises because the GAs lack the global view provided by the lower bound and cannot identify that additional stations do not improve the objective.

\section{Sensitivity Analysis}
In this bi-level optimization model for EV charging station placement, the budget parameter is the only factor that meaningfully influences the optimal station configuration. This occurs because the model exhibits captive demand in which EV users whose battery is insufficient to complete their trip must charge regardless of waiting time, congestion delays, or charging price. Consequently, both components of the objective function (maximizing revenue and minimizing unmet demand) are aligned rather than competing: building more stations always improves both metrics simultaneously. Changing the weight parameter ($w$) only rescales the penalty for unmet demand but does not alter which stations are selected, since all feasible configurations rank in the same order regardless of $w$. Similarly, varying the kappa parameter (capacity scaling factor) affects congestion delays but not station selection, because users have no alternative, and they will use the nearest available station even with high waiting times. The charging price also has no effect on station configuration because demand is inelastic; users cannot choose to forgo charging when their battery is depleted. Therefore, the budget constraint serves as the true limiting factor, and sensitivity analysis is most meaningful when examining how different budget levels affect which stations are selected and the resulting network performance.
\subsection{Budget Sensitivity Analysis}
To investigate the impact of budget availability on optimal charging station deployment, we conducted a sensitivity analysis on the Eastern Massachusetts network with 30 candidate locations, varying the budget proportion $B_{\text{prop}}$ from 0.1 to 0.9.

\begin{table}[htbp]
    \centering
    \scriptsize
    \setlength{\tabcolsep}{3pt}
    \caption{Sensitivity analysis on budget proportion for Eastern Massachusetts (30 candidates)}
    \label{tab:sensitivity_budget}
    \resizebox{\textwidth}{!}{%
    \begin{tabular}{@{}cccccccccc@{}}
        \hline
        $B_{\text{prop}}$ & Gap (\%) & Num\_Selected & BB\_Nodes & Num\_Paths & Total(s) & TAP(s) & RMP(s) & Pricing(s) & Budget\_Util (\%) \\
        \hline
        0.1 & 0.00 & 4/30 & 11 & 4495 & 86.64 & 23.11 & 47.15 & 3.31 & 99.80 \\
        0.2 & 0.22 & 6/30 & 15 & 5473 & 116.48 & 26.29 & 71.90 & 4.76 & 97.61 \\
        0.3 & 0.00 & 9/30 & 27 & 4415 & 167.67 & 26.35 & 119.17 & 8.48 & 96.40 \\
        0.4 & 0.00 & 12/30 & 1 & 3794 & 52.12 & 27.57 & 10.29 & 0.50 & 91.45 \\
        0.5 & 0.00 & 12/30 & 1 & 3794 & 51.09 & 26.89 & 10.23 & 0.50 & 77.78 \\
        0.6 & 0.00 & 12/30 & 1 & 3794 & 51.46 & 27.08 & 10.15 & 0.48 & 65.60 \\
        0.7 & 0.00 & 12/30 & 1 & 3794 & 51.08 & 26.93 & 10.06 & 0.47 & 65.33 \\
        0.8 & 0.00 & 12/30 & 1 & 3794 & 51.40 & 27.17 & 10.10 & 0.47 & 49.03 \\
        0.9 & 0.00 & 12/30 & 1 & 3794 & 51.42 & 27.36 & 9.89 & 0.48 & 39.13 \\
        \hline
    \end{tabular}
    }%
\end{table}

\begin{figure}[htbp]
    \centering
    \includegraphics[width=0.99\linewidth]{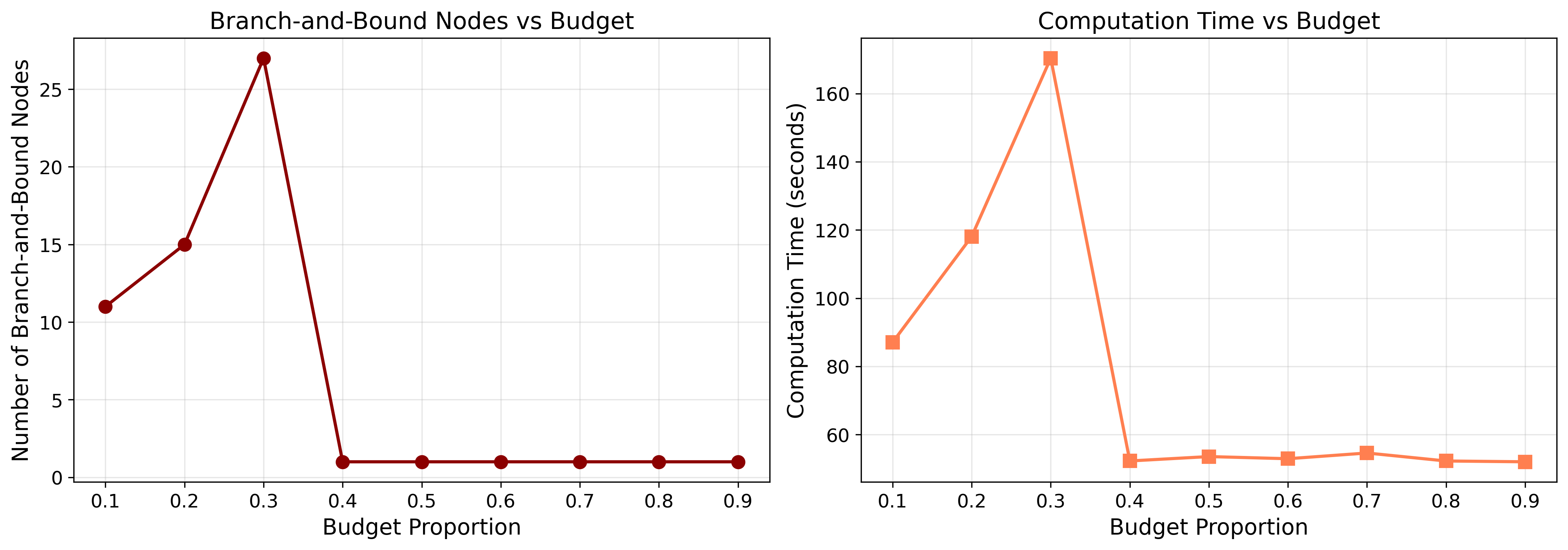}
    \caption{Branch-and-bound nodes and computation time versus budget proportion}
    \label{fig:bb_time_budget}
\end{figure}

\begin{figure}[htbp]
    \centering
    \includegraphics[width=0.65\linewidth]{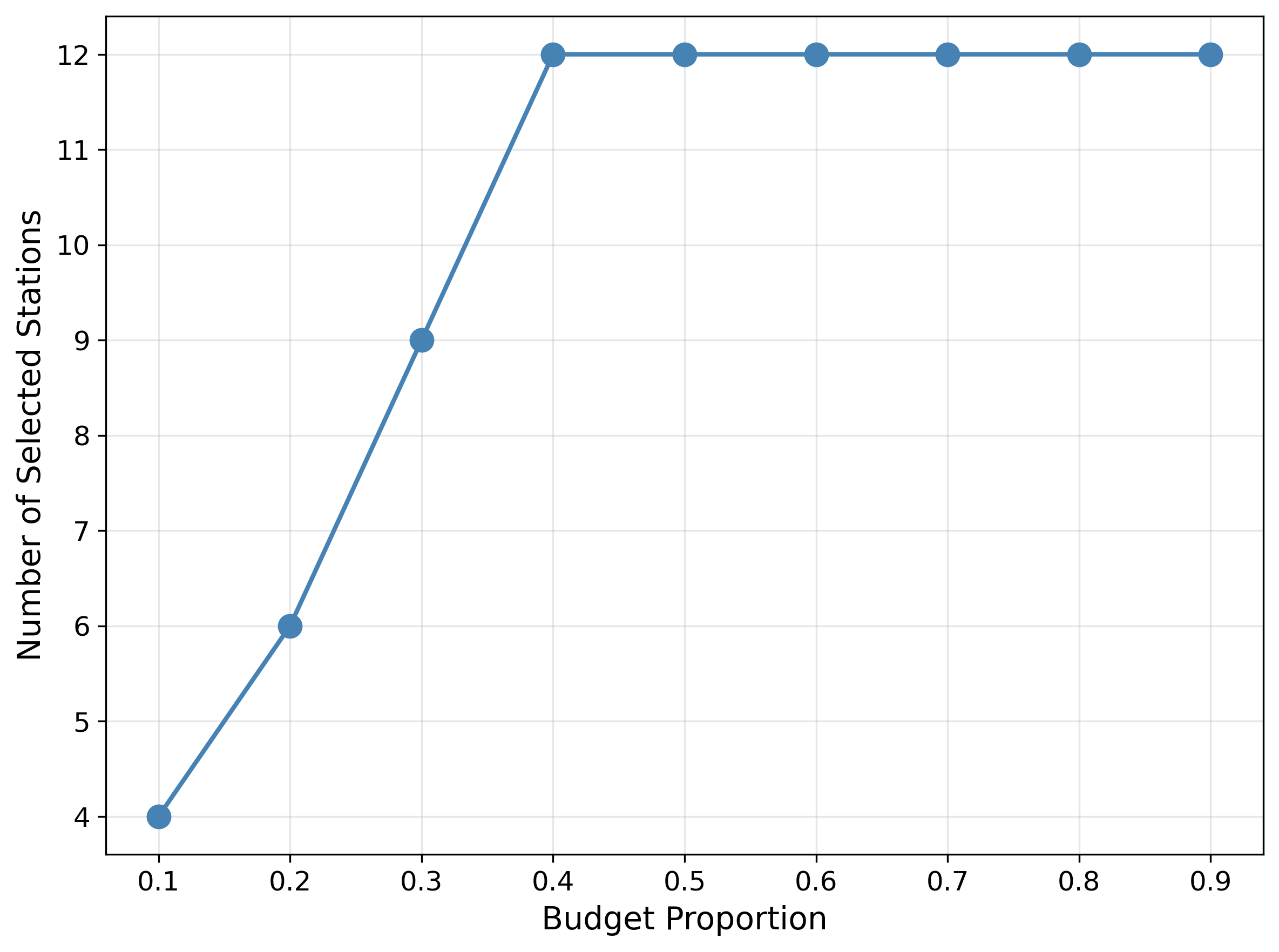}
    \caption{Number of selected stations versus budget proportion}
    \label{fig:selected_stations_budget}
\end{figure}

As shown in Figure \ref{fig:selected_stations_budget}, the number of selected stations increases monotonically from 4 at $B_{\text{prop}} = 0.1$ to 12 at $B_{\text{prop}} = 0.4$, after which it plateaus. This saturation indicates that 12 stations are sufficient to serve all feasible demand in the network, and additional budget beyond $B_{\text{prop}} = 0.4$ provides no further benefit in terms of station deployment. This finding is practically significant for planners, as it identifies the minimum budget threshold beyond which additional investment yields no improvement in network coverage.

The computational behavior exhibits a non-monotonic pattern with respect to budget, as illustrated in Figure \ref{fig:bb_time_budget}. The number of B\&B nodes peaks at 27 for $B_{\text{prop}} = 0.3$ and drops sharply to just 1 node for $B_{\text{prop}} \geq 0.4$. Correspondingly, the total computation time reaches its maximum of 167.67 seconds at $B_{\text{prop}} = 0.3$ and decreases to approximately 51 seconds for higher budget levels. This behavior can be explained by the combinatorial difficulty of the problem: when the budget is tight, the algorithm must explore more branches to determine the optimal subset of stations under binding budget constraints. At moderate budgets ($B_{\text{prop}} = 0.3$), the problem reaches peak difficulty because there are many near-optimal feasible combinations to evaluate. Once the budget is sufficiently large ($B_{\text{prop}} \geq 0.4$), the budget constraint becomes non-binding, and the algorithm can identify the optimal solution at the root node without branching.

The budget utilization decreases steadily from 99.80\% at $B_{\text{prop}} = 0.1$ to 39.13\% at $B_{\text{prop}} = 0.9$, confirming that higher budget levels lead to increasingly slack budget constraints. These results demonstrate that our BPC algorithm adapts efficiently to varying problem difficulty, requiring more computational effort when the problem is combinatorially challenging (tight budgets) and converging rapidly when the solution structure is simpler (loose budgets).

\subsection{Station Capacity Sensitivity Analysis}

We examine the effect of the capacity-cost proportionality constant $\kappa$ on charging station congestion and optimal deployment. Recall that the service capacity of each station is determined by $K_l = \kappa \cdot C_l$, where $C_l$ is the installation cost. Higher values of $\kappa$ yield greater station capacity, reducing congestion delays experienced by EV users. We vary $\kappa \in \{0.2, 0.5, 1.0, 1.5\}$ on the Eastern Massachusetts network with 20 candidate locations and $B_{\text{prop}} = 0.3$.

Table \ref{tab:kappa_capacity} summarizes the station capacities for representative stations at each $\kappa$ value, illustrating how low $\kappa$ values result in station capacities well below the expected flow levels.

\begin{table}[htbp]
    \centering
    \scriptsize
    \caption{Station capacity $K_l$ for selected stations at different $\kappa$ values}
    \label{tab:kappa_capacity}
    \begin{tabular}{@{}cccccc@{}}
        \hline
        Station & Cost ($C_l$) & $K$ ($\kappa$=0.2) & $K$ ($\kappa$=0.5) & $K$ ($\kappa$=1.0) & $K$ ($\kappa$=1.5) \\
        \hline
        231 & 196.90 & 3.96 & 9.90 & 19.80 & 29.70 \\
        226 & 183.75 & 3.70 & 9.24 & 18.48 & 27.72 \\
        182 & 112.03 & 2.25 & 5.63 & 11.27 & 16.90 \\
        187 & 92.89 & 1.87 & 4.67 & 9.34 & 14.01 \\
        179 & 66.61 & 1.34 & 3.35 & 6.70 & 10.05 \\
        \hline
    \end{tabular}
\end{table}

Consequently, small changes in $\kappa$ can produce large differences in congestion when stations operate near or above capacity.

\begin{table}[htbp]
    \centering
    \scriptsize
    \caption{Sensitivity of station congestion to $\kappa$ on Eastern Massachusetts (20 candidates, $B_{\text{prop}} = 0.3$)}
    \label{tab:sensitivity_kappa}
    \begin{tabular}{@{}cccccc@{}}
        \hline
        $\kappa$ & Num\_Selected & Avg Delay (min) & Revenue & Gap (\%) & Total(s) \\
        \hline
        0.2 & 6/20 & 24.66 & 284.57 & 0.00 & 293.57 \\
        0.5 & 6/20 & 20.25 & 284.57 & 0.00 & 307.58 \\
        1.0 & 6/20 & 1.36 & 284.57 & 0.00 & 294.44 \\
        1.5 & 5/20 & 0.66 & 284.57 & 0.90 & 172.47 \\
        \hline
    \end{tabular}
\end{table}

The results reveal that $\kappa$ has a pronounced effect on station congestion while leaving the overall network design largely unchanged. For $\kappa = 0.2$, the average congestion delay across active stations is 24.66 minutes, reflecting severely constrained capacity where station flows substantially exceed the effective service rate. As $\kappa$ increases to 0.5, the delay decreases modestly to 20.25 minutes. However, at $\kappa = 1.0$, the delay drops dramatically to 1.36 minutes, representing a reduction of over 94\% compared to $\kappa = 0.2$. 

Notably, the same set of 6 stations is selected for $\kappa \in \{0.2, 0.5, 1.0\}$, with identical revenue of 284.57, indicating that the optimal station locations are robust to capacity assumptions within this range. At $\kappa = 1.5$, one fewer station is selected (5 instead of 6), suggesting that when individual stations have sufficiently high capacity, fewer facilities are needed to serve the same demand. The revenue remains unchanged at 284.57, as the same total flow is served regardless of the number of stations. The computation time also decreases at $\kappa = 1.5$ (172.47 seconds versus approximately 300 seconds for other values), consistent with the reduced combinatorial complexity of selecting fewer stations.

These findings have practical implications for infrastructure planning. The parameter $\kappa$ reflects the relationship between investment cost and service capacity, which depends on technology choices such as the number of chargers per station and their power output. Our results suggest that investing in higher-capacity stations (larger $\kappa$) can substantially reduce user waiting times without altering the optimal spatial distribution of stations, up to a threshold beyond which fewer but larger stations become preferable.

\section{Conclusion}
In conclusion, this study successfully formulates and implements a bi-level optimization model for optimizing the placement of EV charging stations.  By applying a novel BPC algorithm enhanced with value function cuts and CG, we have achieved exact solutions with proven optimality guarantees, even for large scale networks such as Barcelona, which has 110 zones, 1,020 nodes, and 2,512 links. The network was expanded to 7,559 nodes and 9,169 links.

Our methodology advances the state of the art in charging station location optimization by providing an exact algorithm with proven global optimality guarantees. While existing heuristic methods provide no optimality guarantees and often struggle with solution quality, our algorithm consistently produces exact optimal solutions with mathematical certainty. We note that prior exact algorithms in the literature \cite{doi:10.1287/trsc.2021.0494, 10122631, 10034431,9690622} have been applied to smaller network instances with reported runtimes of tens to hundreds of hours and optimality gaps often exceeding 1\%. However, differences in formulation, objective structure, constraint design, and network representation across these studies preclude direct head-to-head comparison, and we do not draw comparative performance claims from these heterogeneous results. Instead, we validate our algorithm through controlled benchmarks against two genetic algorithms on identical problem instances, where BPC consistently finds equal or better solutions in 3--50$\times$ less computation time across all three tested networks, while additionally providing a certificate of global optimality that no heuristic can offer. Specifically, our computational experiments demonstrate rapid convergence, with optimality gaps below 1\% across all instances and runtimes ranging from seconds to under one hour on networks with up to 1,020 nodes and 2,512 links. The algorithm handles extensive scenarios involving thousands of candidate combinations, demonstrating that metropolitan-scale EV charging infrastructure planning is a tractable optimization task.

Future research can extend the congestion model to capture real-time dynamic queuing, such as real time station congestion. Furthermore, evaluating uncertainties in user behavior, demand forecasts, and technological advancements in battery capacities can enhance practical applicability. Recent work by \citet{RASHGISHISHVAN2025101761} shows that the spatial distribution of charging infrastructure considerably affects load profile smoothness and ramping needs, which suggests that coupling transportation network optimization with power system objectives could yield more holistic infrastructure planning. Additionally, the algorithm runtime can be improved by using faster algorithms like traffic assignment by paired alternative segments \citep{BARGERA20101022} instead of Frank Wolfe that we implement.

\section*{Acknowledgment}
The authors received no financial support for the research, authorship, and/or publication of this article.

\newpage

\bibliographystyle{trb}
\bibliography{trb_template}
\end{document}